\newtheorem{theorem}{Theorem}[section]
\newtheorem{proposition}[theorem]{Proposition}
\newtheorem{definition}{Definition}[section]
\newenvironment{remark}[1][Remark]{\begin{trivlist}
\item[\hskip \labelsep {\bfseries #1}]}{\end{trivlist}}
\newenvironment{remarks}[1][Remarks]{\begin{trivlist}
\item[\hskip \labelsep {\bfseries #1}]}{\end{trivlist}}
\newcommand{\id}{\ensuremath{\mathds{1}}}
\begin{document}
\title{Optimal Injectivity Conditions for Bilinear Inverse Problems with Applications to Identifiability of Deconvolution Problems}
\author{Michael Kech}
\email{kech@ma.tum.de}
\affiliation{Department of Mathematics, Technische Universit\"{a}t M\"{u}nchen, 85748 Garching, Germany}
\author{Felix Krahmer}
\email{felix.krahmer@tum.de}
\affiliation{Department of Mathematics, Technische Universit\"{a}t M\"{u}nchen, 85748 Garching, Germany}

\date{\today}

\begin{abstract}
We study identifiability for bilinear inverse problems under sparsity and subspace constraints. We show that, up to a global scaling ambiguity, almost all such maps are injective on the set of pairs of sparse vectors if the number of measurements $m$ exceeds $2(s_1+s_2)-2$, where $s_1$ and $s_2$ denote the sparsity of the two input vectors, and injective on the set of pairs of vectors lying in known subspaces of dimensions $n_1$ and $n_2$ if $m\geq 2(n_1+n_2)-4$. We also prove that both these bounds are tight in the sense that one cannot have injectivity for a smaller number of measurements. Our proof technique draws from algebraic geometry. As an application we derive optimal identifiability conditions for the deconvolution problem, thus improving on recent work of Li et al.~\cite{li2015identifiability}.

\end{abstract}
\keywords{bilinear inverse problems, deconvolution, uniqueness}

\maketitle

\tableofcontents

\section{Introduction}
While inverse problems have been subject of study for many years, a new viewpoint has been taken in the last years, starting with  the fundamental works on compressed sensing \cite{CRT06,do06b}. Namely, rather than assuming that the inverse problem is completely determined by the application and potentially maximally ill-posed, the paradigm was to use the remaining degrees of freedom in the problem design as much as possible to ensure a unique solution. 

A common observation in many scenarios is that most measurement setups behave near optimally. There are two ways to make this precise and works following both approaches: On the one hand, one can choose the measurement parameters at random and study conditions entailing that recovery is possibly with high probability. On the other hand one can take an information theoretic approach, aiming to establish identifability (that is, solution uniqueness or injectivity) for all measurement parameters except for a set of measure zero.
In contrast to the first approach, the goal is not to devise working solutions, but rather to establish fundamental limits for the number of measurements, which can then be used as a measure to judge the quality of a concrete measurement setup.
As a consequence, the number of measurements needed in the second approach is smaller, but the resulting statement is weaker in the sense that much weaker claims are made about the stability with respect to noise and none about algorithms to find the unique solution.

These approaches have been extensively applied to linear inverse problems, that is, one considers linear measurements of a signal known to satisfy some additional model assumptions.
In this context, the randomized approach is addressed in many works in the areas of compressed sensing -- here the signal is assumed to be sparse --, low rank matrix recovery, and beyond (see \cite{foucart_mathematical_2013} for a textbook with many references). The article \cite{chparewi10} provides a comprehensive treatment to randomized linear inverse problems on a very general level via convex optimization. 
The linear setup has also been studied from the identifiability viewpoint. Under sparsity assumptions, this problem relates to the study of the spark of the measurement matrix, that is, (one more than) the largest number of linearly independent columns (cf.~\cite{ACM12}). For low rank matrix recovery, identifiability conditions have been established in \cite{ENP12}. Again for low rank matrix recovery, identifiability conditions for random signal models have recently been studied in \cite{RSB15}.

More recently, similar considerations have also been applied to the phase retrieval problem, where one still considers linear measurements, but only the (square of) their absolute values is observed. Recovery guarantees for randomized setups have first been proven in \cite{CSV13} and extended in many follow-up works. At the core of many of these works is a lifting idea, namely that phaseless measurements can be expressed as linear measurements on the outer product of the signal with itself. For the study of injectivity conditions, an algebraic geometry viewpoint has proven useful. It turned out to be crucial whether one deals with real or complex measurements. Optimal injectivity bounds (up to a global phase factor) for the real case were proven in \cite{balan_signal_2006}, while the complex case has proven significantly more difficult and bounds have successively improved over the last few years \cite{balan_signal_2006, CDHV15, V15}.

For bilinear inverse problems, i.e., measurements that depend in a bilinear way on two input signals, such considerations are only in their beginnings. Here one can only hope for injectivity up to a global multiplicative constant. Again, via a lifting approach, such problems can be identified with low rank matrix recovery problems. So if no structural constraints are imposed, recovery guarantees directly carry over. 
The setup with additional sparsity assumptions, again under random measurements, is somewhat more involved and has been studied in \cite{LWB13}. Deterministic conditions for injectivity are derived in \cite{LLB15}, where the authors also consider scenarios where one only finds uniqueness up to more general multiplication groups.

Without additional assumptions, there is still a gap between both the randomized and the algebraic setup and measurement systems arising in applications. Namely, applications impose additional constraints on the structure of the measurements, which correspond to a measure zero set in the space of unrestricted measurements, so the aforementioned results typically do not have implications about whether any of the solutions satisfy these constraints. This motivated the study of structured random measurements in all the scenarios mentioned (see \cite{KR14} for a survey on such approaches, not yet including bilinear problems). In compressed sensing, for example, randomly subsampled Fourier measurements have been considered \cite{ruve08}, as motivated by applications in magnetic resonance imaging \cite{ldp07}, as well as subsampled convolutions with a random vector \cite{KMR12}, as motivated by applications in remote sensing and coded aperture imaging \cite{mawi08}. For low-rank matrix recovery, a measurement model arising in collaborative filtering application consists of randomly selected of matrix entries, which yields the so-called matrix completion problem \cite{care09}. For the phase retrieval problem, concatenations of Fourier measurements and random diagonal matrices have been studied, which are motivated by the idea of introducing a mask in a diffraction imaging setup \cite{BCM14, CLS15, GKK15}.

Lastly, for bilinear inverse problems, two important classes of models that have been studied are calibration problems as well as blind deconvolution and demixing problems. From a randomization viewpoint, calibration problems have been studied in \cite{LS15a}, blind deconvolution problems are studied in \cite{ARR14}, and blind demixing problems are studied in \cite{LS15b}. All these papers are based again on lifting ideas. Identifiability conditions for calibration problems are derived in \cite{LLB15}.

Identifiability for the blind deconvolution problem under sparsity or subspace constraints has first been studied in \cite{CM14, CM14b}, in particular providing negative results for signals sparse in the standard basis. Subsequently, this case has been identified as exceptional by providing identifiability results that hold for all sparsity bases  except for a set of measure zero \cite{LLB15a}. These results have then been improved to a near-optimal number of measurements \cite{li2015identifiability}. The authors distinguish between weak and strong identifiability. The former notion relates to the number of measurements  needed to ensure that for a given fixed signal and a generic set of sparsity bases, there is no other signal resulting in the same measurements; the latter requires that property uniformly for all potential signal. In the case of weak identifiability, the set of measurements, where the property fails is allowed to differ for each signal, so there will not necessarily be a set of bases for which one has injectivity, i.e., uniqueness for all signals at the same time.

Conditions for strong identifibiability in the blind deconvolution problem are also the main application of our results. The same techniques also yield conditions for weak identifiability (see Appendix~\ref{app}). In contrast to the result in \cite{li2015identifiability}, the our identifiability results for both cases are tight, that is, our theory implies matching upper and lower bounds for the number of measurements needed. Also, our proof techniques are very different to those in \cite{li2015identifiability}; our work is mainly based on techniques from algebraic geometry.

\textit{Outline.} In Section \ref{0} we fix notation and review the notions of weak and strong identifiability for bilinear inverse problems. 

Then, in Section \ref{main}, we give the main results of the present paper. We aim at discriminating any 
two pairs of vectors $(v,w),(v^\prime,w^\prime)\in \mathbb{C}^{n_1}\times \mathbb{C}^{n_2}$ up to the trivial scaling ambiguity from the outcomes $B(v,w)$, $B(v^\prime,w^\prime)$ of a bilinear measurement map $B:\mathbb{C}^{n_1}\times\mathbb{C}^{n_2}\to\mathbb{C}^m$ under the premise that $v,v^\prime$ are $s_1$-sparse and $w,w^\prime$ are $s_2$-sparse. We show that a bilinear map performing this task exists if and only if $m\geq 2(n_1+n_2)-4$ in case $s_1=n_1$, $s_2=n_2$ and $m\geq 2(s_1+s_2)-2$ otherwise.

In the second part of this section we apply our results to derive strong identfiability conditions for the deconvolution problem, which aims at identifying a signal $v\in\mathbb{C}^m$ and a filter $w\in\mathbb{C}^m$ up to the trivial scaling ambiguity from their circular convolution $v\circledast w$, and compare our results to previous work. Dimension counting already implies that blind deconvolution is infeasible in general, however identifiability may be possible when assuming $v\in V$ and $w\in W$ for some lower dimensional subspaces $V,W\subseteq\mathbb{C}^m$. Indeed we show that for generic subspaces $V,W\in\mathbb{C}^m$ it is possible to discriminate any two pairs of vectors $(v,w),(v^\prime,w^\prime)\in V\times W$ up to the trivial scaling ambiguity from their circular convolutions $v\circledast w$ and $v^\prime\circledast w^\prime$ if $2(\dim V+\dim W)-4\leq m$. The bound $m\geq 2(\dim V+\dim W)-4$ is precisely the lower bound established in the first part of this section making this result optimal. 

Furthermore we consider the scenario in which $v\in\mathbb{C}^{n_1}$ is assumed to be $s_1$-sparse in some basis $E\subseteq \mathbb{C}^{n_1}$ and $w\in\mathbb{C}^{n_2}$ is assumed to be $s_2$-sparse in some basis $D\subseteq \mathbb{C}^{n_2}$. Similar to the previous result we show that for generic bases $E,D$ any two pairs of vectors $(v,w),(v^\prime,w^\prime)\in \mathbb{C}^{n_1}\times \mathbb{C}^{n_2}$ consistent with the sparsity constraints can be discriminated from their circular convolution up to the trivial scaling ambiguity if $2(s_1+s_2)-2\leq m$. Again, since the bound $m\geq 2(s_1+s_2)-2$ is precisely the lower bound established in the first part of this section, this result is optimal.

The proofs of our main results are given in Section \ref{proofs}, some measure theoretic technicalities are deferred to Appendix \ref{measure}.

Our techniques also apply to the weak identifiability problem, i.e., the corresponding problem for $(u,v)$ fixed. For this problem, in Appendix \ref{app}, we also slightly improve the conditions derived in \cite{li2015identifiability} in the case of sparsity constraints and show optimality.

\section{Preliminaries}\label{0}
Let us first fix some notation. We denote the Euclidean norm by $\|\cdot\|_2$. By $M(n_1,n_2)$ we denote the set of complex $n_1\times n_2$ matrices. In the following we always assume $n_1,n_2\geq 2$. The transpose (conjugate transpose) of a matrix $X\in M(n_1,n_2)$ is denoted by $X^t$ ($X^*$). We equip $M(n_1,n_2)$ with the Hilbert-Schmidt inner product defined by setting $\langle X,Y\rangle=\text{tr}(X^*Y)$ for all $X,Y\in M(n_1,n_2)$ and by $\|\cdot\|_F$ we denote the Hilbert-Schmidt/Frobenius norm.  For $i\in\{1,\hdots,n_1\},j\in\{1,\hdots,n_2\}$ we denote by $X_i$ the $i$-th row and by $X_{ij}$ the entry in the $i$-th row and $j$-th column of a matrix $X\in M(n_1,n_2)$ . By $\mathds{1}_n$ we denote the identity on $\mathbb{C}^n$. We denote by $M^r(n_1,n_2)\subseteq M(n_1,n_2)$ the set of complex matrices of rank at most $r$. For $0<s\le n$ let $\mathbb{C}^n_s:=\{x\in\mathbb{C}^n:x\text{ is }s\text{-sparse}.\}$. Furthermore let 
\begin{align*}
M^1_{s_1,s_2}(n_1,n_2):=\{uv^t:u\in\mathbb{C}_{s_1}^{n_1},v\in\mathbb{C}_{s_2}^{n_2}\}.
\end{align*}
For two subsets $V,W\subseteq M(n_1,n_2)$ we define the set $V-W:=\{X-Y:X\in V,Y\in W\}$ and we write $\Delta(V)$ as shorthand for $V-V$. By $\mathcal{L}(m)$ we denote the set of linear maps $M:M(n_1,n_2)\to \mathbb{C}^{m}$ and by $\mathcal{B}(m)$ we denote the set of bilinear maps $B:\mathbb{C}^{n_1}\times \mathbb{C}^{n_2}\to\mathbb{C}^m$. Finally, the kernel and the range of a linear map $L$ are denoted by $\text{ker}\,L$ and $\text{ran}\,L$, respectively. 

Using the Hilbert-Schmidt inner product we can identify $\mathcal{L}(m)$ with $(M(n_2,n_1))^m$. Indeed, for every linear map $M\in\mathcal{L}(m)$ there exists $(Y_1,\hdots,Y_m)\in (M(n_2,n_1))^m$ such that for all $X\in M(n_1,n_2)$ we have $M(X)=(\text{tr}(Y_1X),\hdots,\text{tr}(Y_mX))$ and conversely every $Y:=(Y_1,\hdots,Y_m)\in (M(n_2,n_1))^m$ induces a linear map $M_Y\in\mathcal{L}(m)$ by setting
\begin{align}\label{ind}
M_Y(X):=(\text{tr}(Y_1X),\hdots,\text{tr}(Y_mX))
\end{align}
for all $X\in M(n_1,n_2)$.

In bilinear inverse problems the objective is to reconstruct two vectors $u\in\mathbb{C}^{n_1}\setminus\{0\}$, $v\in\mathbb{C}^{n_2}\setminus\{0\}$ form a measurement outcome $z=B(u,v)\in\mathbb{C}^m$ where $B\in\mathcal{B}(m)$ is a bilinear map. In the best case this can be done modulo the trivial ambiguity
\begin{align*}
B(u,v)=B( \lambda u,1/\lambda v),\ \forall\lambda\in\mathbb{C}\setminus \{0\}.
\end{align*}
This ambiguity naturally induces an equivalence relation on $\mathbb{C}^{n_1}\setminus\{0\}\times\mathbb{C}^{n_2}\setminus\{0\}$ and we denote the equivalence class of $(u,v)\in\mathbb{C}^{n_1}\setminus\{0\}\times\mathbb{C}^{n_2}\setminus\{0\}$ by $[(u,v)]$, i.e., $[(u,v)]=[(u^\prime,v^\prime)]$ for some $u^\prime\in\mathbb{C}^{n_1}\setminus\{0\}$, $v^\prime\in\mathbb{C}^{n_2}\setminus\{0\}$ if there exists $\lambda\in\mathbb{C}\setminus\{0\}$ such that $(u,v)=(\lambda u^\prime,1/\lambda v^\prime)$.

This motivates the following notion of identifiability which is basically the same as part 2 of Definition 2.1 in \cite{li2015identifiability}.
\begin{definition}(Strong identifiability modulo scaling.)
A subset $V\subseteq\mathbb{C}^{n_1}\setminus\{0\}\times\mathbb{C}^{n_2}\setminus\{0\}$ is identifiable modulo scaling with respect to a map $B\in\mathcal{B}(m)$,  if $B(u,v)=B(u^\prime,v^\prime)$ for some $(u,v),(u^\prime,v^\prime)\in V$ implies $[(u,v)]=[(u^\prime,v^\prime)]$.
\end{definition}

This strong notion of identifiability requires that every measurement corresponds to a unique signal. The following weaker notion, basically the same as part 1 of Definition 2.1 in \cite{li2015identifiability}, requires this only for the measurement arising from a given fixed signal.

\begin{definition}(Weak identifiability modulo scaling.)
The restriction of a map $B\in\mathcal{B}(m)$ to a subset $V\subseteq\mathbb{C}^{n_1}\setminus\{0\}\times\mathbb{C}^{n_2}\setminus\{0\}$ is weakly identifiable modulo scaling at $(u_0, v_0)$ if $B(u,v)=B(u_0,v_0)$ for some $(u,v)\in V$ implies $[(u,v)]=[(u_0,v_0)]$. 
\end{definition}

\section{Main Results}\label{main}
\subsection{Tight Bounds for the Injectivity Problem}
In the following we first obtain a general lower bound on the number $m\in\mathbb{N}$ for which there exists an $(s_1,s_2)$-injective bilinear map $B\in\mathcal{B}(m)$. Then, in a next step, we show that this lower bound is indeed tight.

It is well-known that there is a one-to-one correspondence between the sets $\mathcal{B}(m)$ and $\mathcal{L}(m)$. Indeed, every bilinear map $B:\mathbb{C}^{n_1}\times \mathbb{C}^{n_2}\to\mathbb{C}^m$ induces a unique linear map $M:M(n_1,n_2)\to\mathbb{C}^m$ such that for all $x\in\mathbb{C}^{n_1},y\in\mathbb{C}^{n_2}$ we have $M(xy^t)=B(x,y)$. Conversely each linear map $M:M(n_1,n_2)\to\mathbb{C}^m$ induces a bilinear map $B:\mathbb{C}^{n_1}\times \mathbb{C}^{n_2}\to\mathbb{C}^m$ by setting $B(x,y):=M(xy^t)$ for all $x\in\mathbb{C}^{n_1},y\in\mathbb{C}^{n_2}$. Throughout the present section we take advantage of this correspondence and work with the set $\mathcal{L}(m)$ of linear maps rather than the set $\mathcal{B}(m)$ of bilinear maps.

\begin{definition}($(s_1,s_2)$-injective.)
A linear map $M\in\mathcal{L}(m)$ is $(s_1,s_2)$-injective if $M|_{M^1_{s_1,s_2}(n_1,n_2)}$ is injective.
\end{definition}
Clearly, a bilinear map $B\in\mathcal{B}(m)$ is $(s_1,s_2)$-injective modulo scaling if and only if the associated linear map $M\in\mathcal{L}(m)$ is $(s_1,s_2)$-injective.

Let us next introduce a notion of stability for $(s_1,s_2)$-injective linear maps.
\begin{definition}(Stability.)
A linear map $M\in\mathcal{L}(m)$ is stably $(s_1,s_2)$-injective if there exists a constant $C>0$ (possibly dependent on all the parameters of $M$) such that $\|M(X)\|_2\geq C\|X\|_F$ for all $X\in\Delta(M^1_{s_1,s_2}(n_1,n_2))$.
\end{definition}
\begin{remark}
Formally, the notion of stability we give here can be understood as a generalization of the notion of stability given in Definition 2.3 of \cite{eldar2014phase} for the Phase Retrieval Problem. Indeed for $x,y\in\mathbb{R}^n$ we have on the one hand 
\begin{align*}
\|x-y\|_2^2\|x+y\|_2^2-\|xx^t-yy^t\|_F^2=2(\|x\|_2^2\|y\|_2^2-\langle x,y\rangle)\geq 0
\end{align*}
by the Cauchy-Schwarz inequality and on the other hand 
\begin{align*}
\|xx^t-yy^t\|=&\sup_{\|v\|_2=1}v^t(xx^t-yy^t)v=\frac12\sup_{\|v\|_2=1}v^t((x+y)(x-y)^t+(x-y)(x+y)^t)v\\
&\leq \sup_{\|v\|_2=1,\|u\|_2=1}u^t((x+y)(x-y)^t)v=\|x-y\|_2^2\|x+y\|_2^2,
\end{align*}
where $\|\cdot\|$ denotes the operator norm.
However, different from \cite{eldar2014phase}, throughout the present paper we do not aim for a universal constant $C$ for which the stability bound holds, but rather allow $C$ to depend on all the parameters of the linear map $M$.
\end{remark}

Under the premise of this rather weak notion of stability we obtain the following lower bound on the number of measurement outcomes necessary for $(s_1,s_2)$-injectivity.
\begin{theorem}\label{thm1}(Lower bound.)
If there is a stably $(s_1,s_2)$-injective linear map $M\in\mathcal{L}(m)$, then 
\begin{align*}
m\ge\left\{
	\begin{array}{ll}
		2(n_1+n_2)-4  & \mbox{if } s_1=n_1,s_2=n_2, \\
		2(s_1+s_2)-2 & \mbox{else}.
	\end{array}
\right.
\end{align*} 
\end{theorem}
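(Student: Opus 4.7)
The plan is to reduce the stability hypothesis to a trivial-intersection statement on a Zariski closure, then bound $m$ by an affine intersection dimension count. First, I would show that stability implies $\ker M\cap W=\{0\}$, where $W:=\overline{\Delta(M^1_{s_1,s_2}(n_1,n_2))}$ is the Zariski (equivalently, Euclidean) closure in $M(n_1,n_2)$. Indeed, if a sequence $(X_k)\subseteq\Delta(M^1_{s_1,s_2}(n_1,n_2))$ converged to $X^\ast\neq 0$ with $M(X^\ast)=0$, then $\|M(X_k)\|_2\to 0$ while $\|X_k\|_F\to\|X^\ast\|_F>0$, contradicting the stability inequality. The set $W$ is a complex algebraic cone, being the closure of the image of a polynomial subtraction map on the cone $M^1_{s_1,s_2}(n_1,n_2)\times M^1_{s_1,s_2}(n_1,n_2)$.

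Next I would exhibit an irreducible subvariety $W_0\subseteq W$ of the claimed complex dimension. In the case $s_1=n_1,\ s_2=n_2$, one has $\Delta(M^1(n_1,n_2))=M^2(n_1,n_2)$, since any rank-$2$ matrix is expressible as $uv^t-u'v'^t$; so $W_0:=M^2(n_1,n_2)$ has $\dim_\mathbb{C} W_0=2(n_1+n_2)-4$. In the remaining case, assume without loss of generality that $s_1<n_1$; pick $S_1\neq S_1'\subseteq\{1,\dots,n_1\}$ of size $s_1$ and $S_2,S_2'\subseteq\{1,\dots,n_2\}$ of size $s_2$, and let $W_0$ be the Zariski closure of the image of
\begin{equation*}
\phi\colon\mathbb{C}^{s_1}\times\mathbb{C}^{s_2}\times\mathbb{C}^{s_1}\times\mathbb{C}^{s_2}\to M(n_1,n_2),\quad(u,v,u',v')\mapsto \tilde u\tilde v^t-\tilde u'\tilde v'^t,
\end{equation*}
where tildes denote the natural embeddings supported on the chosen index sets. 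Since $|S_1|=|S_1'|=s_1$ and $S_1\neq S_1'$, both $S_1\setminus S_1'$ and $S_1'\setminus S_1$ are nonempty; fixing $i$ in the first and $i'$ in the second, the $i$-th row of $\phi(u,v,u',v')$ equals $u_iv^t$ and the $i'$-th row equals $-u_{i'}'v'^t$. These rows recover $v$ and $v'$ up to scalar, and then $u,u'$ up to the inverse scalars, so the generic fiber of $\phi$ is a two-dimensional $(\mathbb{C}^\ast)^2$-orbit, yielding $\dim_\mathbb{C} W_0=2(s_1+s_2)-2$.

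To conclude, both $\ker M$ (of complex dimension at least $n_1n_2-m$) and the irreducible cone $W_0$ pass through $0$ in $\mathbb{C}^{n_1 n_2}$. The affine intersection inequality for cones (equivalently, the projective dimension theorem applied to their projectivizations) gives that every component of $W_0\cap\ker M$ has complex dimension at least $\dim W_0+\dim\ker M-n_1n_2\geq\dim W_0-m$. If $m<\dim W_0$, this intersection would contain a positive-dimensional subvariety, hence a punctured line $\mathbb{C}^\ast\cdot v$ with $v\neq 0$ (since $W_0\cap\ker M$ is itself a cone), contradicting the first step. Therefore $m\geq\dim W_0$, which is $2(n_1+n_2)-4$ in the first case and $2(s_1+s_2)-2$ in the second. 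The main obstacle is the generic fiber analysis for $\phi$: one must verify that the rows indexed by $S_1\setminus S_1'$ and $S_1'\setminus S_1$ genuinely decouple the two rank-$1$ summands. The other ingredients, a short continuity argument and a standard algebraic geometry fact, are conceptually routine.
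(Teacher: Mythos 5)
Your argument is correct and follows essentially the same route as the paper: stability forces $\ker M$ to meet the (Euclidean $=$ Zariski) closure of $\Delta(M^1_{s_1,s_2}(n_1,n_2))$ only at the origin, an irreducible cone of dimension $2(s_1+s_2)-2$ (resp.\ $2(n_1+n_2)-4$) is exhibited inside that closure by decoupling two distinct supports, and the dimension theorem for (projectivized) varieties yields $m\ge\dim W_0$. The only differences are cosmetic --- the paper phrases the intersection projectively and proves the dimension lower bound by writing an explicit inverse of the subtraction map on a quasi-algebraic set rather than by a generic-fiber count --- though your fiber argument additionally needs $\tilde v,\tilde v'$ to be generically linearly independent (automatic for $s_2\ge 2$; for $s_2=1$ choose $S_2\neq S_2'$, which is possible since $n_2\ge 2$).
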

The proof of this result can be found in Section \ref{proofs}.
\begin{remark}
Let us note that this lower bound also applies to a slightly more general scenario which will be of relevance in the next part of this section. Indeed, let $V\subseteq\mathbb{C}^{n_1}$, $W\subseteq\mathbb{C}^{n_2}$ be subspaces with bases $E:=\{e_1,\hdots,e_{\dim V}\}\subseteq V$ and $F=\{f_1,\hdots,f_{\dim W}\}\subseteq W$, respectively. Denote by $V_{s_1}\subseteq V$ the elements of $V$ that are $s_1$-sparse in the basis $E$ and by $W_{s_2}\subseteq W$ the elements of $W$ that are $s_2$-sparse in the basis $F$. If a bilinear map $B\in\mathcal{B}(m)$ is such that $B$ restricted to $V_{s_1}\times W_{s_2}$ is injective modulo scaling, then the bilinear map $\tilde{B}:\mathbb{C}^{\dim V}\times\mathbb{C}^{\dim W}\to\mathbb{C}^n,\, (x,y)\mapsto B(\sum_{i}x_ie_i,\sum_{i}y_if_i)$ is $(s_1,s_2)$-injective modulo scaling. Thus, under the premise of stability, the lower bound given in \ref{thm1} also applies to this scenario.
\end{remark}

Next we show that the lower bound given by Theorem \ref{thm1} is indeed tight. In the following theorem the term ``almost all'' refers to the Lebesgue measure on $(M(n_1,n_2))^m$ which represents $\mathcal{L}(m)$ via \eqref{ind}.
\begin{theorem}\label{thma}(Upper bound.)
Almost all linear maps $M\in\mathcal{L}(m)$ are stably $(s_1,s_2)$-injective if 
\begin{align*}
m\ge\left\{
	\begin{array}{ll}
		2(n_1+n_2)-4  & \mbox{if } s_1=n_1,s_2=n_2, \\
		2(s_1+s_2)-2 & \mbox{else}.
	\end{array}
\right.
\end{align*}
\end{theorem}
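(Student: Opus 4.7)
The plan is to first reduce stable injectivity to generic injectivity on the Zariski closure $\bar{\Delta}$ of $\Delta := \Delta(M^1_{s_1,s_2}(n_1,n_2))$ via a compactness argument, and then establish generic injectivity by an algebraic--geometric dimension count on an incidence variety encoding $\ker M \cap \bar{\Delta} \neq \{0\}$. Since $\bar{\Delta}$ is a $\mathbb{C}^\ast$-invariant cone, I will projectivize; this is precisely what lets the bound drop by one to match the theorem.

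The key preparatory step is the computation of $\dim_\mathbb{C} \bar{\Delta}$. The set $M^1_{s_1,s_2}(n_1,n_2)$ is a finite union of embedded copies of the rank-one determinantal variety $M^1(s_1,s_2)$, indexed by support pairs $(I,J)$ with $|I|=s_1,\,|J|=s_2$, each of complex dimension $s_1+s_2-1$. In the strictly sparse regime the difference map $(A,B) \mapsto A-B$ is generically finite on products $M^1_{I_1,J_1} \times M^1_{I_2,J_2}$ with $(I_1,J_1)\neq(I_2,J_2)$, because from a generic difference one can read off the two supports and within each support the rank-one factor is unique; hence $\dim_\mathbb{C} \bar{\Delta} = 2(s_1+s_2)-2$. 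In the full regime $s_1=n_1,\,s_2=n_2$ there is only one support pair and $\bar{\Delta}=M^{\le 2}(n_1,n_2)$ is the classical rank-two determinantal variety of complex dimension $2(n_1+n_2)-4$; the drop of two reflects the $\mathrm{GL}_2$-ambiguity in the rank-one decomposition of a rank-two matrix.

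Now form the incidence variety
\begin{equation*}
\mathcal{W} = \bigl\{([X],M) \in \mathbb{P}(\bar{\Delta}) \times \mathcal{L}(m) : M(X) = 0\bigr\},
\end{equation*}
which is well defined since $M(\lambda X)=\lambda M(X)$. For each fixed $[X]$, the evaluation $M \mapsto M(X)$ is a linear surjection $\mathcal{L}(m)\to\mathbb{C}^m$ (it is realized as $Y_i \mapsto \mathrm{tr}(Y_i X)$ on each factor), so its kernel has complex codimension $m$. Applying the fiber dimension theorem to the first projection,
\begin{equation*}
\dim_\mathbb{C}\mathcal{W} \le (\dim_\mathbb{C}\bar{\Delta} - 1) + \dim_\mathbb{C}\mathcal{L}(m) - m,
\end{equation*}
and under either hypothesis $m \ge \dim_\mathbb{C}\bar{\Delta}$ the right-hand side is at most $\dim_\mathbb{C}\mathcal{L}(m)-1$. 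The image of $\mathcal{W}$ under the second projection is constructible (Chevalley) and therefore contained in its Zariski closure, a proper complex algebraic subset of $\mathcal{L}(m)$; in the real identification $\mathcal{L}(m)\simeq (M(n_2,n_1))^m$ such a subset has Lebesgue measure zero.

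For any $M$ in the complement one has $\ker M \cap \bar{\Delta} = \{0\}$. Since $\bar{\Delta}$ is a closed cone in $M(n_1,n_2)$, the intersection of $\bar{\Delta}$ with the Frobenius unit sphere is compact, and continuity of $\|M(\cdot)\|_2$ supplies a strictly positive minimum $C > 0$; homogeneity then yields $\|M(X)\|_2 \ge C\|X\|_F$ for every $X \in \bar{\Delta} \supseteq \Delta$, so $M$ is stably $(s_1,s_2)$-injective. I anticipate the principal technical obstacle to be the clean dimension bookkeeping for $\bar{\Delta}$ in the sparse regime: one must verify uniformly across all (partially overlapping) support configurations $(I_1,J_1),(I_2,J_2)$ that the corresponding component of $\Delta$ has dimension at most $2(s_1+s_2)-2$, and confirm that at least one dominant component actually attains this bound so that the projectivized count is sharp.
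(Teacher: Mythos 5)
Your argument is correct and follows essentially the same route as the paper: reduce stable injectivity to $P(\ker M)\cap P(\overline{\Delta})=\emptyset$ by compactness and homogeneity, bound $\dim\Delta(M^1_{s_1,s_2}(n_1,n_2))$ by $2(s_1+s_2)-2$ (resp.\ $2(n_1+n_2)-4$), and show the bad set of maps is contained in the image of an incidence variety whose dimension is at most $\dim\mathcal{L}(m)-1$ by a fiber count over the projectivized difference set. The only cosmetic differences are that the paper works with the affine charts $\{X_{ij}=1\}$ of $P(\overline{\Delta})$ and verifies the codimension-$m$ claim equation by equation, whereas you invoke the fiber dimension theorem on $\mathbb{P}(\overline{\Delta})$ directly and work with the Zariski closure throughout, which lets you bypass the paper's Proposition 4.2.
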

The following Theorem shows that the lower bound established in Theorem \ref{thm1} also applies for a certain structured subset of $\mathcal{L}(m)$ which will be of relevance in the next section. In the following theorem the term ``almost all'' refers to the Lebesgue measure.
\begin{theorem}\label{thm2}
For almost all $(Y,Z)\in M(m,n_1)\times M(m,n_2)$, the linear map 
\begin{align*}
M_{Y,Z}: M(n_1,n_2)\to\mathbb{C}^m,\ X\mapsto (\text{tr}(Z_1^tY_1X),\hdots,\text{tr}(Z_m^tY_mX))
\end{align*}
is stably $(s_1,s_2)$-injective if 
\begin{align*}
m\ge\left\{
	\begin{array}{ll}
		2(n_1+n_2)-4  & \mbox{if } s_1=n_1,s_2=n_2, \\
		2(s_1+s_2)-2 & \mbox{else}. 
	\end{array}
\right.
\end{align*}
\end{theorem}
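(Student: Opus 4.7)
The plan is to carry out a dimension count on an appropriate incidence variety in the product of the parameter space $M(m,n_1)\times M(m,n_2)$ with (the projectivisation of) the Zariski closure of $\Delta(M^1_{s_1,s_2}(n_1,n_2))$; this establishes generic $(s_1,s_2)$-injectivity outside a proper algebraic (hence Lebesgue-null) set. A routine compactness argument then upgrades this to stable $(s_1,s_2)$-injectivity.

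Concretely, set up the incidence
\begin{equation*}
\mathcal{I}:=\{((Y,Z),[X])\in M(m,n_1)\times M(m,n_2)\times \mathbb{P}(\overline{\Delta(M^1_{s_1,s_2}(n_1,n_2))}) : Y_i X Z_i^t = 0\ \text{for all}\ i\}.
\end{equation*}
The key observation is that, for any fixed $X\in M(n_1,n_2)\setminus\{0\}$, the map $(y,z)\mapsto yXz^t$ is a nonzero bilinear form on $\mathbb{C}^{n_1}\times\mathbb{C}^{n_2}$ (if $X_{jk}\neq 0$, the coefficient of $y_jz_k$ is nonzero), so its vanishing locus is a hypersurface of codimension one. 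Hence the fibre of $\mathcal{I}$ over $[X]$ has dimension $m(n_1+n_2-1)$, yielding
\begin{equation*}
\dim \mathcal{I} = \dim \mathbb{P}(\overline{\Delta(M^1_{s_1,s_2})}) + m(n_1+n_2-1).
\end{equation*}

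Next, I would compute $\dim\overline{\Delta(M^1_{s_1,s_2})}$. In the strict sparsity case, decomposing by support pairs exhibits $\Delta(M^1_{s_1,s_2})$ as a finite union of images of maps $(u_1,v_1,u_2,v_2)\mapsto u_1v_1^t - u_2v_2^t$ with fixed supports, and after quotienting by the two rank-one scaling ambiguities each such image has dimension at most $2(s_1+s_2)-2$; this maximum is attained when the two supports differ, yielding $\dim\overline{\Delta(M^1_{s_1,s_2})}=2(s_1+s_2)-2$. In the full case $s_1=n_1$, $s_2=n_2$, $\overline{\Delta(M^1(n_1,n_2))}$ is the determinantal variety of matrices of rank at most two, of dimension $2(n_1+n_2)-4$. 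In either case, the hypothesis on $m$ gives $\dim \mathbb{P}(\overline{\Delta})\le m-1$, whence $\dim\mathcal{I}<\dim(M(m,n_1)\times M(m,n_2))$. The image of $\mathcal{I}$ under projection to the $(Y,Z)$-factor is a constructible set of strictly smaller dimension, hence has Lebesgue measure zero by the measure-theoretic arguments collected in Appendix~\ref{measure}. Thus for almost every $(Y,Z)$, $\ker M_{Y,Z}\cap\overline{\Delta(M^1_{s_1,s_2})}=\{0\}$ and in particular $M_{Y,Z}$ is $(s_1,s_2)$-injective.

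Finally, to pass from injectivity to stability: $\overline{\Delta(M^1_{s_1,s_2})}\cap\{X:\|X\|_F=1\}$ is compact (closed and bounded in $M(n_1,n_2)$), and on it the continuous function $X\mapsto \|M_{Y,Z}(X)\|_2$ is strictly positive, hence bounded below by some $C>0$; by homogeneity this extends to $\|M_{Y,Z}(X)\|_2\geq C\|X\|_F$ for all $X\in\Delta(M^1_{s_1,s_2})$. The main technical hurdle I anticipate is the bookkeeping around the irreducible components of $\Delta(M^1_{s_1,s_2})$ and verifying uniformly across supports that the claimed dimension bound for the closure is tight; once that is in hand, the fibre-dimension computation, the measure-zero conclusion, and the compactness upgrade are all essentially formal.
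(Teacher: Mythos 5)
Your proposal is correct and follows essentially the same route as the paper: the identical incidence-variety dimension count over (the projectivisation of) $\overline{\Delta(M^1_{s_1,s_2}(n_1,n_2))}$, using that for fixed $X\neq 0$ each equation $\mathrm{tr}(Z_i^tY_iX)=0$ involves only the $i$-th factor and cuts exactly one dimension there, combined with the dimension bound $\dim\Delta(M^1_{s_1,s_2}(n_1,n_2))\le 2(s_1+s_2)-2$ (resp.\ $2(n_1+n_2)-4$) and the compactness argument upgrading injectivity on the closed cone to stability. The only cosmetic difference is that the paper replaces $P(\overline{\Delta(M^1_{s_1,s_2}(n_1,n_2))})$ by the union of its affine charts $\{X_{ij}=1\}$ so as to work entirely in affine space.
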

The proof of this result can be found in Section \ref{proofs}.

\subsection{Blind Deconvolution with Subspace and Sparsity Constraints}\label{1}

In this section we apply Theorem \ref{thm2} to the deconvolution problem. By $v\circledast w\in\mathbb{C}^m$ we denote the circular convolution of vectors $v,w\in\mathbb{C}^m$, i.e., for all $i\in\{1,\hdots,m\}$ we have $(v\circledast w)_i=\sum_{j=1}^mv_{j}w_{[(i-j-1)\,\text{mod}\,m]+1}$ . The mapping
\begin{align*}
C:\mathbb{C}^m\times\mathbb{C}^m&\to\mathbb{C}^m\\
(v,w)&\mapsto v\circledast w
\end{align*}
is easily seen to be bilinear. Since the dimension of the domain of $C$ is larger than the dimension of the range of $C$, $C$ cannot be injective modulo scaling for $m>1$. However, when imposing sparsity or subspace constraints on the domain of $C$ this can change.

Let us first focus on subspace constraints. For $k\leq m$, consider the set
\begin{align*}
G(m,k):=\{P\in M(m,m): P\text{ is a rank }k\text{ orthogonal projection.}\}
\end{align*}
which can naturally be identified with the Grassmannian of $k$-dimensional subspaces of $\mathbb{C}^m$. We make $(G(m,k),d^\prime)$ a compact metric space by setting $d^\prime(P,P^\prime):=\|P-P^\prime\|_F$ for all $P,P^\prime\in G(m,k)$. Let $\mu_k$ be the Haar measure of $(G(m,k),d^\prime)$ with respect to the action $G:U(m)\times G(m,k)\to G(m,k), (U,P)\mapsto UPU^*$ of the unitary group $U(m)$ on $G(m,k)$ \footnote{Recall that for $(X,d)$ a compact metric space and $G$ be a group of isometries of $(X,d)$ that acts transitively on $X$, the Haar measure $\lambda$ on $(X,d)$ with respect to $G$ is the unique $G$-invariant Borel measure with $\lambda(X)=1$ (See for instance theorems 1.1 and 1.3 of \cite{milman2009asymptotic}).}. In the following theorem, the term ``almost all'' refers to the product measure measure $\mu_k\times\mu_l$.
\begin{theorem}\label{thm3}(Deconvolution with subspace constraint.)
Let $k,l\in\mathbb{N}_+$ be such that $2(k+l)-4\leq m$. Then, for almost all pairs of projections $(P,P^\prime)\in G(m,k)\times G(m,l)$, $\text{ran}\,P\times \text{ran}\,P^\prime$ is strongly identifiable modulo scaling with respect to the circular convolution map $C:\mathbb{C}^m\times\mathbb{C}^m\to\mathbb{C}^m,\ (v,w)\mapsto v\circledast w$.
\end{theorem}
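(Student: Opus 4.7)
The plan is to reduce the problem to Theorem \ref{thm2} via Fourier diagonalization of the circular convolution, and then transfer the resulting full-Lebesgue-measure statement on matrix space to a full-Haar-measure statement on the Grassmannian using the algebraic structure of the ``bad'' locus.

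Step one: diagonalize. Let $F$ denote the unitary DFT on $\mathbb{C}^m$; a direct computation gives $F(v\circledast w)=(Fv)\odot(Fw)$, where $\odot$ is the entrywise product. Parametrize $V:=\mathrm{ran}\,P$ and $W:=\mathrm{ran}\,P'$ by bases $E\in M(m,k)$ and $D\in M(m,l)$, so that $v=Ex$ and $w=Dy$. The $i$-th Fourier coefficient of $v\circledast w$ is then $((FE)_i\,x)\,((FD)_i\,y)=\mathrm{tr}((FD)_i^{\,t}(FE)_i\,xy^t)$. In other words, post-composing with the invertible $F$ converts $C|_{V\times W}$ into the bilinear map associated to $M_{FE,FD}$ from Theorem \ref{thm2}. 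Consequently, $C|_{V\times W}$ is injective modulo scaling iff $M_{FE,FD}$ is $(k,l)$-injective on $M^1(k,l)$. This property is invariant under changes of basis $E\mapsto ER$, $D\mapsto DS$ (which merely reparametrize $(x,y)$), so it depends only on the subspaces $FV, FW$.

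Step two: apply Theorem \ref{thm2} with $n_1=s_1=k$ and $n_2=s_2=l$. Since $2(k+l)-4\le m$, one obtains a Lebesgue-full subset $\mathcal{G}\subseteq M(m,k)\times M(m,l)$ consisting of $(Y,Z)$ for which $M_{Y,Z}$ is stably $(k,l)$-injective, and in particular injective on $M^1(k,l)$.

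Step three: transfer to the Grassmannian. The complement of $\mathcal{G}$ is contained in the algebraic set cut out by the condition ``there exists a nonzero $X\in M(k,l)$ of rank at most $2$ with $M_{Y,Z}(X)=0$'', which is Zariski-closed and (by Theorem \ref{thm2}) proper. Because the condition depends only on the column spans of $Y$ and $Z$, it descends to a proper Zariski-closed, hence $\mu_k\times\mu_l$-null, subset of $G(m,k)\times G(m,l)$. Finally, the bijection $P\mapsto FPF^*$ of $G(m,k)$ preserves $\mu_k$ by unitarity of $F$ together with the $U(m)$-invariance of Haar measure, and similarly on the $l$-factor, so pulling back gives the desired full-measure set of ``good'' $(P,P')$.

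The main obstacle is the Grassmannian transfer in step three: a generic Lebesgue-null subset of matrix space need not descend to a Haar-null subset of the lower-dimensional Grassmannian, and the descent requires using the Zariski-closedness of the bad locus (readable off from the algebraic-geometric proof of Theorem \ref{thm2}). The underlying measure-theoretic lemma needed to justify this descent is the content of Appendix \ref{measure}.
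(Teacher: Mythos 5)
Your proposal is correct and follows the paper's overall strategy -- Fourier diagonalization to reduce $C|_{V\times W}$ to a map of the form $M_{FE,FD}$, an application of Theorem \ref{thm2} with $n_1=s_1=k$, $n_2=s_2=l$, and a transfer of the resulting Lebesgue-full statement to the Grassmannian -- but your Step three handles the transfer by a genuinely different mechanism than the paper does. You argue algebraically: the bad locus is contained in a proper Zariski-closed set (closedness coming, implicitly, from completeness of $P(M^2(k,l))$ under the projection eliminating $X$), this set is invariant under $(Y,Z)\mapsto(YR,ZS)$ for $R\in GL(k)$, $S\in GL(l)$, hence descends to a proper Zariski-closed subset of $G(m,k)\times G(m,l)$, which is null because proper subvarieties of an irreducible smooth projective variety are null for the invariant volume. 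This works, but each link (descent of invariant closed sets along the geometric quotient $F(m,k)\to G(m,k)$; identification of the abstractly defined Haar measure $\mu_k$ with the smooth invariant volume so that subvarieties are null) is itself a small lemma you are taking for granted. The paper's Appendix \ref{measure} does something different and more elementary: it uses only that the good set $\mathcal{I}$ is invariant under scaling and under right multiplication by unitaries, pushes the Lebesgue-full set forward to the unit sphere via the cone-measure description of $\sigma_k$, then forward to $G(m,k)$ via $X\mapsto\Pi_{\text{ran}\,X}$, and identifies the pushforward with $\mu_k$ by uniqueness of the invariant probability measure -- no Zariski-closedness of the bad locus is used at any point. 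So your closing sentence misattributes your descent argument to Appendix \ref{measure}; what the appendix actually supplies is the measure-theoretic alternative that makes the algebraic structure of the bad locus unnecessary for this step. Your route buys a statement that the exceptional set is not merely null but contained in a proper subvariety of the Grassmannian (slightly stronger); the paper's route buys a self-contained argument requiring only invariance and measurability. One last cosmetic point: the correct identity is $v\circledast w=mF^*\bigl((Fv)\odot(Fw)\bigr)$ with the paper's normalization of $F$; the missing constant is of course harmless for injectivity.
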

\begin{remarks}
\ \newline
\vspace{-18pt}
\begin{itemize}
\item[(a)] Note, that this result is optimal. Indeed, since $\Delta(M^1(\dim V,\dim W))=M^2(\dim V,\dim W)$, we conclude that $\Delta(M^1(\dim V,\dim W))$ is closed. Thus, by Proposition \ref{prop3}, the map $C$ restricted to $\text{ran}\,P\times \text{ran}\,P$ is injective modulo scaling if and only if the associated linear map $L_C:M(k,l)\to\mathbb{C}^m$ (cf. remark after Theorem \ref{thm1}) is stably $(k,l)$-injective.  Hence, by Theorem \ref{thm1}, there do not exist projections  $(P,P^\prime)\in G(m,k)\times G(m,l)$ such that $C$ restricted to $\text{ran}\,P\times \text{ran}\,P^\prime$ is injective modulo scaling  if $2(k+l)-4> m$. This shows the statement for the second part. For the first part the same argument applies.
\item[(b)] As a comparison, Theorem~3.2 in \cite{li2015identifiability} requires $2(k+l)< m$ measurements to guarantee strong identifiability, so as expected there is only a small improvement. We hence consider it to be our main achievement that our bounds are provably optimal.
\end{itemize}
\end{remarks}
The proof of this theorem is given in Section \ref{proofs}.

Next we consider sparsity constraints. Denote by 
\begin{align*}
F(m,k):=\{X\in M(m,k):\text{rank}\,X=k\}
\end{align*}
the set of all collections of $k$ linearly independent vectors in $\mathbb{C}^m$.  In the following theorem the term ``almost all'' refers to the Lebesgue measure.

\begin{theorem}\label{thm4}(Deconvolution with sparsity constraint.)
For  $E\in F(m,k)$, set $\text{ran}(E)_s:=\{x\in\text{ran}\,E:x\text{ is }s\text{-sparse when expanded in }E .\}$. Let $s_1,s_2\in\mathbb{N}_+$ be such that $2 (s_1+s_2)-2\leq m$ and let $k,l\in\mathbb{N}$ be such that $s_1<k\leq m$, $s_2<l\leq m$. Then, for almost all pairs  $(E,D)\in F(m,k)\times F(m,l) $, $\text{ran}(E)_{s_1}\times \text{ran}(D)_{s_2}$ is strongly identifiable modulo scaling with respect to the circular convolution map $C:\mathbb{C}^m\times\mathbb{C}^m\to\mathbb{C}^m,\ (v,w)\mapsto v\circledast w$.
\end{theorem}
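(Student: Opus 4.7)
The plan is to diagonalize the circular convolution by the discrete Fourier transform and thereby recast the problem in the form handled by Theorem \ref{thm2}. Let $F\in M(m,m)$ denote the DFT, which satisfies the convolution identity $F(v\circledast w)=(Fv)\odot(Fw)$ (entrywise product). Given $(E,D)\in F(m,k)\times F(m,l)$, I would set $Y:=FE\in M(m,k)$ and $Z:=FD\in M(m,l)$; a direct calculation then yields, for all $x\in\mathbb{C}^k$ and $y\in\mathbb{C}^l$,
\begin{align*}
F\cdot C(Ex,Dy) = \bigl((Y_ix)(Z_iy)\bigr)_{i=1}^m = \bigl(\text{tr}(Z_i^tY_i\,xy^t)\bigr)_{i=1}^m = M_{Y,Z}(xy^t),
\end{align*}
with $M_{Y,Z}$ exactly the structured linear map appearing in Theorem \ref{thm2}.

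Next I would transfer the generic statement of Theorem \ref{thm2} through the linear isomorphism $(E,D)\mapsto(FE,FD)$ of $M(m,k)\times M(m,l)$. Since this isomorphism preserves Lebesgue null sets, and since $s_1<k$ and $s_2<l$ place us in the ``else'' branch of Theorem \ref{thm2} with its bound $m\ge 2(s_1+s_2)-2$ satisfied, it follows that $M_{FE,FD}$ is stably (in particular, plainly) $(s_1,s_2)$-injective for almost every $(E,D)\in M(m,k)\times M(m,l)$. The locus of rank-deficient matrices is a proper algebraic subvariety, so $F(m,k)\times F(m,l)$ has full Lebesgue measure in $M(m,k)\times M(m,l)$ and the generic property survives the restriction.

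Finally I would convert $(s_1,s_2)$-injectivity of $M_{FE,FD}$ into strong identifiability modulo scaling of $C$ on $\text{ran}(E)_{s_1}\times \text{ran}(D)_{s_2}$. Given $(v,w),(v',w')$ in this set with $C(v,w)=C(v',w')$, full column rank of $E$ and $D$ produces unique nonzero sparse coefficient vectors $x,x'\in\mathbb{C}^k_{s_1}$ and $y,y'\in\mathbb{C}^l_{s_2}$ with $v=Ex$, $v'=Ex'$, $w=Dy$, $w'=Dy'$. Applying $F$ to the convolution identity and invoking the Fourier reduction above yields $M_{Y,Z}(xy^t)=M_{Y,Z}(x'y'^t)$, whence $xy^t=x'y'^t$. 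Uniqueness of rank-one factorizations then forces $(x',y')=(\lambda x,\lambda^{-1}y)$ for some $\lambda\in\mathbb{C}\setminus\{0\}$, which is exactly $(v,w)\sim(v',w')$ modulo the scaling ambiguity. The only step requiring any genuine care is the measure-theoretic transfer in paragraph two, since one has to rule out the possibility that the structured family $\{M_{FE,FD}\}$ accidentally coincides with the bad null set of Theorem \ref{thm2}; this is avoided precisely because the family is the image of a linear isomorphism applied to all of $M(m,k)\times M(m,l)$. Beyond this bookkeeping the argument is essentially formal.
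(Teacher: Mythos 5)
Your proposal is correct and follows essentially the same route as the paper: diagonalize the circular convolution via the DFT, recognize the resulting coordinates as the structured map $M_{FE,FD}$ of Theorem~\ref{thm2}, and transfer the almost-everywhere statement back through the measure-preserving linear isomorphism $(E,D)\mapsto(FE,FD)$. Your closing worry about the structured family landing in a bad null set is moot, since Theorem~\ref{thm2} is already stated for the structured family itself rather than for general linear maps.
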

\begin{remarks}
\ \newline
\vspace{-18pt}
\begin{itemize}
\item[(a)]
Note that under the premise of stability this result is optimal. Indeed,  by Theorem \ref{thm1} and the remark afterwards there do not exist $(E,D)\in F(m,k)\times F(m,l)$ such that $C$ restricted to $\text{ran}(E)_{s_1}\times \text{ran}(F)_{s_2}$ is identifiable modulo scaling  if $2(s_1+s_2)-2> m$.
\item[(b)] As a comparison, Theorem~3.2 in \cite{li2015identifiability} requires $m>2(s_1+s_2)$ measurements for strong identifiability in this case.
\item[(c)]
Note that this result also applies to the standard convolution $\tilde{C}:\mathbb{C}^n\times\mathbb{C}^n\to\mathbb{C}^{2n-1},\,(u,v)\mapsto (\sum_{j=-\infty}^{+\infty}u_jv_{i-(j-1)})_{i=1}^{2n-1}$, where $u_j=0$, $v_j=0$ for $j\not\in\{1,\hdots,n\}$. Indeed, by embedding $\mathbb{C}^n$ appropriately in $\mathbb{C}^{2n-1}$, $\tilde{C}$ can be understood as a restriction of the circular convolution $C:\mathbb{C}^{2n-1}\times \mathbb{C}^{2n-1}\to \mathbb{C}^{2n-1}$.
\end{itemize}
\end{remarks}
The proof of this theorem is given in Section \ref{proofs}.

\section{Proofs}\label{proofs}
\subsection{Algebraic Geometry Background and Notation}
The proof of Theorem \ref{thm1} relies on results from classical algebraic geometry so let us fix some conventions (which are close to \cite{hartshorne1977algebraic}). We call a set $V\subseteq\mathbb{C}^n$ an algebraic set if it is the common zero locus of a set of complex polynomials in $n$ variables. The Zariski topology on $\mathbb{C}^n$ is defined by choosing its closed sets to be the algebraic sets. A non-empty subset $V$ of $\mathbb{C}^n$ equipped with the Zariski topology is called irreducible if it cannot be expressed as a union of two proper subsets of $V$, each of which is relatively closed in $V$.  We call an algebraic set an affine variety if it is irreducible. Subsets of an algebraic set that are relatively open in the Zariski topology are called quasi algebraic sets. For a subset $V\subseteq\mathbb{C}^n$ , we denote by $P(V)\subseteq P(\mathbb{C}^{n})$ its projectification, i.e., the image of $V$ under the canonical projection $P:\mathbb{C}^n\to P(\mathbb{C}^{n})$. If a subset $V\subseteq\mathbb{C}^n$ is the common zero locus of a set of homogeneous polynomials,  we call $P(V)$ a projective algebraic set. We denote by $\overline{V}$ the analytic closure of a subset $V\subseteq \mathbb{C}^n$, i.e., its closure in the standard topology of $\mathbb{C}^n$. Furthermore $\overline{V}_Z$ denotes the closure of a subset $V\subseteq \mathbb{C}^n$ in the Zariski topology. By $\dim V$ we denote the algebraic dimension of a subset $V\subseteq\mathbb{C}^n$.

For a subset $A\subseteq\{1,\hdots,n\}$ define the projection $P_A:\mathbb{C}^n\to\mathbb{C}^n$ by setting
\begin{align*}
(P_A(x))_i:= \left\{
	\begin{array}{ll}
		x_i  & \mbox{if } i\in A, \\
		0 & \mbox{else} 
	\end{array}
\right.
\end{align*}
for all $x\in\mathbb{C}^n$ and $i\in\{1,\hdots,n\}$.
 Furthermore, let $\mathcal{A}(n,s):=\{A\subseteq\{1,\hdots,n\}:|A|=s\}$. Then 
\begin{align*}
M^1_{s_1,s_2}(n_1,n_2)=\bigcup_{A\in \mathcal{A}(n_1,s_1),\ B\in \mathcal{A}(n_2,s_2)}W_{A,B},
\end{align*}
where $W_{A,B}:=\{P_Au(P_Bv)^*:u\in \mathbb{C}^{n_1},\ v\in \mathbb{C}^{n_2}\}$. Let $S_{s_1,s_2}:=W_{\{1,\hdots,s_1\},\{1,\hdots,s_2\}}$.

\subsection{Proof of Theorem \ref{thm1}}\label{sub1}
Theorem \ref{thm1} can be proven straightforwardly from the following three propositions. Their proofs are relegated to the end of this subsection.
\begin{proposition}\label{prop1}
We have 
\begin{align*}
\dim \Delta(M^1_{s_1,s_2}(n_1,n_2))=\left\{
	\begin{array}{ll}
		2(n_1+n_2-2) & \mbox{if } s_1=n_1,s_2=n_2, \\
		2(s_1+s_2-1) & \mbox{else}.
	\end{array}
\right.
\end{align*}
\end{proposition}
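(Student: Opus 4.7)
The plan is to compute $\dim \Delta(M^1_{s_1,s_2}(n_1,n_2))$ by exhibiting it as a finite union of constructible sets, bounding each piece from above via a natural parametrization, and then attaining that bound with a judicious choice of supports. The starting point is the decomposition
\begin{align*}
\Delta\bigl(M^1_{s_1,s_2}(n_1,n_2)\bigr) = \bigcup_{(A,B),(A',B')\in\mathcal{A}(n_1,s_1)\times\mathcal{A}(n_2,s_2)} \bigl(W_{A,B}-W_{A',B'}\bigr),
\end{align*}
induced from the decomposition of $M^1_{s_1,s_2}(n_1,n_2)$ recalled just before the proposition. Since the union is finite, the algebraic dimension equals the maximum of $\dim(W_{A,B}-W_{A',B'})$ over $(A,B),(A',B')$. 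Each piece is the image of the polynomial map
\begin{align*}
\phi_{A,B,A',B'}\colon \mathbb{C}^{s_1}\times\mathbb{C}^{s_2}\times\mathbb{C}^{s_1}\times\mathbb{C}^{s_2}\to M(n_1,n_2),\quad (u,v,u',v')\mapsto uv^t-u'{v'}^t,
\end{align*}
with $u,u'$ placed on the coordinates indexed by $A,A'$ and $v,v'$ on those indexed by $B,B'$. This map is invariant under the two-dimensional torus action $(\lambda,\mu)\cdot(u,v,u',v')=(\lambda u,\lambda^{-1}v,\mu u',\mu^{-1}v')$, which has generically trivial stabilizer, so every piece has dimension at most $2(s_1+s_2)-2$.

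In the special case $s_1=n_1$, $s_2=n_2$ only the trivial index sets are available, so $M^1_{s_1,s_2}(n_1,n_2)=M^1(n_1,n_2)$. Every rank-at-most-two matrix is a difference of two rank-one matrices (for instance via the SVD), giving $\Delta(M^1(n_1,n_2))=M^2(n_1,n_2)$, and the classical dimension formula for the determinantal variety of rank-$\le r$ matrices yields $\dim M^2(n_1,n_2)=2(n_1+n_2-2)$, which matches the first case of the claim.

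In the complementary case I would assume without loss of generality that $s_2<n_2$, the case $s_1<n_1$ being handled symmetrically by transposition. I would then make the explicit choice $A=A'=\{1,\dots,s_1\}$, $B=\{1,\dots,s_2\}$, and $B'=\{2,\dots,s_2+1\}$, which is admissible precisely because $s_2+1\le n_2$. The main obstacle is to show that the generic fiber of $\phi_{A,B,A',B'}$ reduces exactly to the two-dimensional torus orbit above. To that end I would observe that the column of $\phi(u,v,u',v')$ indexed by the unique element of $B\setminus B'$ equals $v_1\, u$, while the column indexed by the unique element of $B'\setminus B$ equals $-v'_{s_2+1}\, u'$; generically these are nonzero and thus determine $u$ and $u'$ each up to a scalar, after which the remaining columns recover $v$ and $v'$ in the way dictated by the torus action. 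This pins the generic fiber to the torus orbit, so equality is attained in the upper bound, giving $\dim\bigl(W_{A,B}-W_{A,B'}\bigr)=2(s_1+s_2)-2=2(s_1+s_2-1)$, as required.
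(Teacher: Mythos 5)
Your overall strategy coincides with the paper's: decompose $\Delta(M^1_{s_1,s_2}(n_1,n_2))$ into the finitely many pieces $W_{A,B}-W_{A^\prime,B^\prime}$, bound each piece above by $2(s_1+s_2-1)$ (your torus-quotient count on the parametrization is equivalent to the paper's bound by $\dim W_{A,B}+\dim W_{A^\prime,B^\prime}$), dispose of the full-support case via $\Delta(M^1(n_1,n_2))=M^2(n_1,n_2)$ and the determinantal dimension formula, and attain the bound on one judiciously chosen piece by showing the difference map has two-dimensional generic fibers there. Your way of pinning down the fiber --- reading off $u$ and $u^\prime$ up to scale from the columns indexed by $B\setminus B^\prime$ and $B^\prime\setminus B$ --- is a clean variant of the paper's construction of explicit dual functionals $\omega_1(X),\omega_2(Y)$ inverting $\psi$ on a quasi-algebraic set $\mathcal{D}$; the paper instead lets the \emph{row} supports differ and keeps $B=B^\prime$.

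There is, however, a gap in the attainment step. After recovering $u$ and $u^\prime$ up to scalars, recovering $v_j,v_j^\prime$ from the overlap columns $v_j u - v_j^\prime u^\prime$, $j\in B\cap B^\prime$, requires $u$ and $u^\prime$ to be linearly independent, hence $s_1\ge 2$. When $s_1=1$ the piece you chose is genuinely too small: with $A=A^\prime=\{1\}$ every matrix supported on a single row automatically has rank at most one, so $W_{A,B}$ and $W_{A,B^\prime}$ are linear subspaces of dimension $s_2$ and $W_{A,B}-W_{A,B^\prime}$ is their sum, of dimension $|B\cup B^\prime|=s_2+1$, which is strictly smaller than $2(s_1+s_2-1)=2s_2$ once $s_2\ge 2$. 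So for $s_1=1$, $s_2\ge 2$ the claimed equality fails for that piece, and the maximum is attained elsewhere --- e.g.\ by $A=\{1\}$, $A^\prime=\{2\}$ (available since $n_1\ge 2$) and $B=B^\prime$, which yields a direct sum of two $s_2$-dimensional coordinate subspaces. The fix is to let the supports differ on a side whose sparsity parameter is at least $2$, and to treat $s_1=s_2=1$ directly (any choice with $A\neq A^\prime$ or $B\neq B^\prime$ then gives dimension $2$). It is worth noting that the paper's own proof has the mirror image of this issue: its set $\mathcal{D}$ requires the rows $X_1$ and $Y_2$, both supported on the common column support $B$, to be linearly independent, which is impossible when $s_2=1$; so in either write-up the degenerate case $\min(s_1,s_2)=1$ needs the complementary choice of supports.
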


\begin{proposition}\label{prop2}
The analytic closure of $\Delta(M^1_{s_1,s_2}(n_1,n_2))$ is the common zero locus of a set of homogeneous polynomials. In particular $\overline{\Delta(M^1_{s_1,s_2}(n_1,n_2))}=\overline{\Delta(M^1_{s_1,s_2}(n_1,n_2))}_Z$.
\end{proposition}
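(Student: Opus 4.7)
The plan is to establish three facts. First, $\Delta(M^1_{s_1,s_2}(n_1,n_2))$ is a constructible subset of $M(n_1,n_2)$. Second, for a constructible set the analytic and Zariski closures coincide; this is the main algebro-geometric input, which I would cite from standard literature on complex algebraic varieties. Third, the set in question is closed under scalar multiplication, so its closure is a cone and therefore cut out by homogeneous polynomials. Together these yield both assertions of the proposition.

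For the first step, I would start from the decomposition $M^1_{s_1,s_2}(n_1,n_2)=\bigcup_{A,B} W_{A,B}$ already fixed. Each $W_{A,B}$ is an affine algebraic set: the entries indexed outside $A\times B$ vanish, which are linear homogeneous conditions, and all $2\times 2$ minors vanish, which are quadratic homogeneous conditions; and conversely every such rank-$\leq 1$ matrix supported on $A\times B$ is of the form $P_A u(P_B v)^*$. Hence $M^1_{s_1,s_2}(n_1,n_2)$ is a finite union of affine varieties, in particular a constructible (indeed algebraic) set. Then $\Delta(M^1_{s_1,s_2}(n_1,n_2))$ is the image of the polynomial morphism
\[
M^1_{s_1,s_2}(n_1,n_2)\times M^1_{s_1,s_2}(n_1,n_2)\to M(n_1,n_2),\qquad (X,Y)\mapsto X-Y,
\]
so by Chevalley's theorem its image is constructible.

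For the second step I would invoke the fact that any constructible subset $V\subseteq\mathbb{C}^N$ satisfies $\overline{V}=\overline{V}_Z$. The point is that a constructible set intersects each irreducible component of its Zariski closure in a Zariski-open, Zariski-dense subset, and a Zariski-dense subset of a complex irreducible variety is automatically analytically dense. This immediately gives the "in particular" clause.

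For the third step I would observe that $\Delta(M^1_{s_1,s_2}(n_1,n_2))$ is stable under scaling: for any $\lambda\in\mathbb{C}$ the identity $\lambda(uv^t-u'v'^t)=(\lambda u)v^t-(\lambda u')v'^t$ preserves the relevant sparsity patterns. Because scaling by $\lambda\in\mathbb{C}^*$ is a Zariski automorphism of $M(n_1,n_2)$, the Zariski closure $\overline{\Delta(M^1_{s_1,s_2}(n_1,n_2))}_Z$ is likewise scale-invariant, and expanding any polynomial in its vanishing ideal into homogeneous components and letting $\lambda$ vary forces every homogeneous component to vanish on the closure. Hence this ideal is generated by homogeneous polynomials, and the closure equals their common zero locus. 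The only nontrivial ingredient is the constructible-closure coincidence invoked in step two; all other steps are direct.
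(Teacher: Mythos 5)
Your proposal is correct and follows essentially the same route as the paper: constructibility of the image of $(X,Y)\mapsto X-Y$ (the paper invokes this piecewise for each $W_{A,B}-W_{A^\prime,B^\prime}$ via Harris, you invoke Chevalley for the whole union), the coincidence of analytic and Zariski closures for constructible sets (the paper's citation of Mumford), and the cone/homogeneous-components argument. The only differences are cosmetic --- treating the finite union all at once rather than piece by piece, and deducing scale-invariance of the closure from the Zariski automorphism rather than from convergent sequences.
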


\begin{proposition}\label{prop3}
A linear map $M\in\mathcal{L}(m)$ is stably $(s_1,s_2)$-injective if and only if $P(\text{ker}\,M)\cap P(\overline{\Delta(M^1_{s_1,s_2}(n_1,n_2))})=\emptyset$. 
\end{proposition}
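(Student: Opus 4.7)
\smallskip

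\noindent\textbf{Proof plan for Proposition \ref{prop3}.} The plan is to translate the projective non-intersection condition into a statement about cones through the origin, and then to exploit that $\Delta(M^1_{s_1,s_2}(n_1,n_2))$ is invariant under scalar multiplication, so that stability reduces to a compactness argument on the Frobenius unit sphere. Write $S:=M^1_{s_1,s_2}(n_1,n_2)$ and $\Delta S:=\Delta(S)$ for brevity. First I would record two trivial observations: (i) $\Delta S$ is a cone, since $S$ is (if $X=uv^*\in S$ then $\lambda X=(\lambda u)v^*\in S$, with the same sparsity supports), and hence so is the analytic closure $\overline{\Delta S}$; and (ii) $\ker M$ is a linear subspace, in particular a cone. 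Both $\ker M$ and $\overline{\Delta S}$ contain $0$, so the projective condition $P(\ker M)\cap P(\overline{\Delta S})=\emptyset$ is equivalent to $\ker M\cap \overline{\Delta S}=\{0\}$.

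For the forward direction, assume $M$ is stably $(s_1,s_2)$-injective with constant $C>0$ and suppose for contradiction that there exists $X\in \overline{\Delta S}\setminus\{0\}$ with $M(X)=0$. Pick a sequence $(X_k)\subseteq \Delta S$ with $X_k\to X$ in the Frobenius norm. By the stability bound $\|M(X_k)\|_2\geq C\|X_k\|_F$, passing to the limit (using continuity of $M$ and of $\|\cdot\|_F$) yields $0=\|M(X)\|_2\geq C\|X\|_F>0$, a contradiction.

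For the converse, assume $\ker M\cap\overline{\Delta S}=\{0\}$ and consider
\begin{equation*}
K:=\overline{\Delta S}\cap\{X\in M(n_1,n_2):\|X\|_F=1\}.
\end{equation*}
Being the intersection of an analytically closed set with the compact unit Frobenius sphere, $K$ is compact. The map $X\mapsto \|M(X)\|_2$ is continuous, and by our hypothesis it is strictly positive on $K$; hence it attains a positive minimum $C>0$ on $K$. For any $Y\in\Delta S\setminus\{0\}$, the normalization $Y/\|Y\|_F$ lies in $\Delta S\cap\{\|X\|_F=1\}\subseteq K$ by the cone property, so $\|M(Y/\|Y\|_F)\|_2\geq C$, which gives $\|M(Y)\|_2\geq C\|Y\|_F$; the inequality is trivial at $Y=0$. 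This proves stable $(s_1,s_2)$-injectivity with constant $C$.

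There is no substantial obstacle here: the argument is a standard compactness/homogeneity dichotomy, and the only point requiring a bit of care is the translation between the projective non-intersection condition and the cone-theoretic statement $\ker M\cap \overline{\Delta S}=\{0\}$, which in turn relies on the fact (established above) that both sets are cones containing the origin.
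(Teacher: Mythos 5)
Your proof is correct and takes essentially the same route as the paper's: both directions rest on the same compactness argument on the Frobenius unit sphere intersected with $\overline{\Delta(M^1_{s_1,s_2}(n_1,n_2))}$, combined with the homogeneity (cone property) of that set. Your write-up is in fact slightly more careful than the paper's in making explicit the equivalence between the projective non-intersection condition and $\ker M\cap\overline{\Delta(M^1_{s_1,s_2}(n_1,n_2))}=\{0\}$.
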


\begin{proof}[Proof of Theorem \ref{thm1}.]
By Proposition \ref{prop3} a linear map $M\in\mathcal{L}(m)$ is stably $(s_1,s_2)$-injective if and only if $P(\text{ker}\,M)\cap P(\overline{\Delta(M^1_{s_1,s_2}(n_1,n_2))})=\emptyset$. Clearly $P(\text{ker}\,M)$ is a projective algebraic set and by Proposition \ref{prop2}, $P(\overline{\Delta(M^1_{s_1,s_2}(n_1,n_2))})$ also is a projective algebraic set.

By the intersection theorem for complex varieties (see Theorem 7.2 of \cite{hartshorne1977algebraic}), together with the observation that a projective algebraic set contains an irreducible projective algebraic subset of the same dimension, two projective algebraic sets $V,W\subseteq P(\mathbb{C}^{n+1})$ have non-empty intersection if $\dim V+\dim W\geq n$. Hence, if $M$ is stably $(s_1,s_2)$-injective we have $\dim P(\text{ker}\,M)+\dim P(\overline{\Delta(M^1_{s_1,s_2}(n_1,n_2))})<\dim P(\mathbb{C}^{n_1n_2})$. Noting that $\dim P(\text{ker}\,M)\ge n_1n_2-m-1$ \footnote{Note that if $V\subseteq\mathbb{C}^n$ is such that $P(V)$ is a projective algebraic set, then $\dim P(V)=\dim V-1$.}, we find $n_1n_2-m-1+\dim P(\overline{\Delta(M^1_{s_1,s_2}(n_1,n_2))})<n_1n_2-1$. But this implies, using again Proposition \ref{prop2}, that 
\begin{align*}
m&\ge\dim P(\overline{\Delta(M^1_{s_1,s_2}(n_1,n_2))})+1=\dim \overline{\Delta(M^1_{s_1,s_2}(n_1,n_2))}\\
&=\dim \overline{\Delta(M^1_{s_1,s_2}(n_1,n_2))}_Z=\dim \Delta(M^1_{s_1,s_2}(n_1,n_2))
\end{align*}
and Proposition \ref{prop1} concludes the proof.
\end{proof}

\begin{proof}[Proof of Proposition \ref{prop1}.]
Let $A\in \mathcal{A}(n_1,s_1),\ B\in \mathcal{A}(n_2,s_2)$. We begin by determining the dimension of $W_{A,B}$. $W_{A,B}$ is the set of $X\in M^1(n_1,n_2)$ such that $((\id_{n_1}-P_A)X)_{ij}=0$, $(X(\id_{n_2}-P_B))_{ij}=0$ for all $i\in\{1,\hdots,n_1\}$, $j\in\{1,\hdots,n_2\}$. Furthermore, $M^1(n_1,n_2)$ is well-known to be an algebraic set \footnote{$M^1(n_1,n_2)$ is the common zero locus of the $2\times2$-minors.} and hence $W_{A,B}$ is an algebraic set. Let $I_A:\mathbb{C}^{s_1}\to\text{ran}\,P_A$, $I_B:\mathbb{C}^{s_2}\to\text{ran}\,P_B$ be the canonical linear embeddings, then the mapping 
\begin{align}\label{eta}
\eta:M(s_1,s_2)\to W_{A,B},\,X\to I_AXI_B
\end{align}
 also is a linear embedding. In particular $\eta|_{M^1(s_1,s_2)}$ yields an isomorphism between $M^1(s_1,s_2)$ and $W_{A,B}$. Hence, using Example 12.1 of \cite{harris2013algebraic}, we find $\dim W_{A,B}=\dim M^1(s_1,s_2)=s_1+s_2-1$.

We now start by considering the case $n_1=s_1$ and $n_2=s_2$. In this case we find $\Delta(M^1_{n_1,n_2}(n_1,n_2))=\Delta(M^1(n_1,n_2))=M^2(n_1,n_2)$. The set $M^2(n_1,n_2)$ is an algebraic set and its dimension is given by $2(n_1+n_2-2)$ (Again by Example 12.1 of \cite{harris2013algebraic}.).

Secondly, let us assume that $s_1<n_1$, the case $s_2<n_2$ can be treated analogously. Let $A,A^\prime\in \mathcal{A}(n_1,s_1),\ B,B^\prime\in \mathcal{A}(n_2,s_2)$. Consider the morphism
\begin{align*}
\psi: W_{A,B}\times W_{A^\prime,B^\prime}&\to M(n_1,n_2)\\
(X,Y)&\to X-Y
\end{align*}
and note that $\psi(W_{A,B}\times W_{A^\prime,B^\prime})=W_{A,B}-W_{A^\prime,B^\prime} $. Therefore we find 
\begin{align*}
\dim (W_{A,B}-W_{A^\prime,B^\prime})\leq\dim W_{A,B}+\dim W_{A^\prime,B^\prime}=2\dim S=2(s_1+s_2-1)
\end{align*}
and since $\Delta(M^1_{s_1,s_2}(n_1,n_2))=\bigcup_{A,A^\prime\in \mathcal{A}(n_1,s_1),\ B,B^\prime\in \mathcal{A}(n_2,s_2)}W_{A,B}-W_{A^\prime,B^\prime}$ this implies 
\begin{align}\label{ieq1}
\dim \Delta(M^1_{s_1,s_2}(n_1,n_2))\leq 2(s_1+s_2-1).
\end{align}

Since $s_1<n_1$ there exist $A,A^\prime\in \mathcal{A}(n_1,s_1)$ such that $1\in A$, $1\notin A^\prime$ and $2\in A^\prime$, $2\notin A$. Furthermore, let $B\in \mathcal{A}(n_2,s_2)$ and consider the set
\begin{align*}
\mathcal{D}:=&W_{A,B}\times W_{A^\prime,B}\cap \{(X,Y)\in M(n_1,n_2)\times M(n_1,n_2): X_{11}\neq 0,\ Y_{21}\neq 0,\\
&Y_{21}\cdot X_1-X_{11}\cdot Y_2\neq 0\}.
\end{align*}
The set $\mathcal{D}$ clearly is non-empty and quasi algebraic. Furthermore we have $\dim \mathcal{D}=2(s_1+s_2-1)$. One way to see this is the following: Since both $W_{A,B}$ and $W_{A^\prime,B}$ are isomorphic to $M^1(s_1,s_2)$, $W_{A,B}\times W_{A^\prime,B}$ is isomorphic to $M^1(s_1,s_2)\times M^1(s_1,s_2)$.  By Proposition 12.2 of \cite{harris2013algebraic} $M^1(s_1,s_2)$ is irreducible \footnote{Proposition 12.2 of \cite{harris2013algebraic} just states that $P(M^1(s_1,s_2))$ is irreducible. An algebraic set is irreducible if and only if its associated polynomial ideal is prime and the same holds for projective algebraic sets (see for instance Corollary 1.4 and Exercise 2.4 (b) of \cite{hartshorne1977algebraic}). But the polynomial ideals associated to  $PM^1(s_1,s_2)$ and  $M^1(s_1,s_2)$ clearly are equal and thus $M^1(s_1,s_2)$ is irreducible as well.} and hence an affine variety. By Exercice 3.15 of \cite{hartshorne1977algebraic}, the product of two affine varieties is irreducible and hence $M^1(s_1,s_2)\times M^1(s_1,s_2)$ is irreducible. Finally, by Example 1.1.3 of \cite{hartshorne1977algebraic} a non-empty quasi algebraic subset of an irreducible set is irreducible. Hence $\mathcal{D}$ is irreducible and thus by Exercise 1.6 and Proposition 1.10 of \cite{hartshorne1977algebraic} we have $\dim \mathcal{D}=\dim \overline{\mathcal{D}}_Z=\dim W_{A,B}\times W_{A^\prime,B}=2(s_1+s_2-1)$
 \footnote{A more direct approach consists of showing the injectivity of the mapping
 \begin{align*}
&(\mathbb{C}^{s_1}\setminus\{0\}\times \mathbb{C}^{s_2-1}\setminus\{0\})\times(\mathbb{C}^{s_1}\setminus\{0\}\times \mathbb{C}^{s_2-1}\setminus\{0\})\setminus\mathcal{S}\to\mathcal{D},\\
&((u,v),(u^\prime,v^\prime))\mapsto \left(I_A(u)I_B\left(\begin{pmatrix} 1\\ v\end{pmatrix}\right)^t,I_{A^\prime}(u^\prime)I_B\left(\begin{pmatrix} 1\\ v^\prime\end{pmatrix}\right)^t\right)
\end{align*}
where $\mathcal{S}:=\{((u,v),(u^\prime,v^\prime))\in(\mathbb{C}^{s_1}\setminus\{0\}\times \mathbb{C}^{s_2-1}\setminus\{0\})\times (\mathbb{C}^{s_1}\setminus\{0\}\times \mathbb{C}^{s_2-1}\setminus\{0\}):v_1-v^\prime_1=0,\, u_1=0,\,u_2^\prime=0\}$ and $I_A$ denotes a linear embedding of $\mathbb{C}^s_1$ into $\text{ran}\, P_A$. We leave details to the reader.}. 

Next we prove that $\dim\psi(\mathcal{D})=\dim \mathcal{D}=2(s_1+s_2-1)$ by showing that the morphism $\psi|_{\mathcal{D}}$ is injective: Let  $(X,Y)\in\mathcal{D}$. Then, by the definition of $\mathcal{D}$, $X_1$ and $Y_2$ are non-vanishing and linearly independent. Hence there are vectors $\omega_1(X),\omega_2(Y)\in\mathbb{C}^{n_2}$ such that $\langle\omega_1(X),X_1\rangle=1,\ \langle\omega_2(Y),X_1\rangle=0, \langle\omega_1(X),Y_2\rangle=0$ and $\langle\omega_2(Y),Y_2\rangle=1$. Note that $\langle\omega_1(X),Y_i\rangle=0$ for all $i\in\{1,\hdots,n_1\}$ by the fact that $Y$ is rank one. Similarly we have $\langle\omega_2(Y),X_i\rangle=0$ for all $i\in\{1,\hdots,n_1\}$. Again using the fact that both $X$ and $Y$ are rank one it follows from a straightforward computation that for all $i\in\{1,\hdots,n_1\}$ we have
\begin{align*}
\langle\omega_1(X),(X-Y)_i\rangle X_1&=\langle\omega_1(X),X_i\rangle X_1=X_i\\
-\langle\omega_2(Y),(X-Y)_i\rangle Y_2&=\langle\omega_2(Y),Y_i\rangle Y_2=Y_i.
\end{align*}
This explicitly defines an inverse map of $\psi$ on $\psi(\mathcal{D})$, showing that $\psi|_D$ is injective.

But since $\psi(D)\subseteq\Delta(M^1_{s_1,s_2}(n_1,n_2))=\bigcup_{A,A^\prime\in \mathcal{A}(n_1,s_1),\ B,B^\prime\in \mathcal{A}(n_2,s_2)}W_{A,B}-W_{A^\prime,B^\prime}$ we find $2(s_1+s_2-1)=\dim\mathcal{D}=\dim\psi\mathcal(D)\leq \dim \Delta(M^1_{s_1,s_2}(n_1,n_2))$, which, together with Inequality \eqref{ieq1}, concludes the proof.
%
\end{proof}

\begin{proof}[Proof of Proposition \ref{prop2}.]
Since \begin{align*}
\overline{\Delta(M^1_{s_1,s_2}(n_1,n_2))}=\bigcup_{A,A^\prime\in \mathcal{A}(n_1,s_1),\ B,B^\prime\in \mathcal{A}(n_2,s_2)}\overline{W_{A,B}-W_{A^\prime,B^\prime}},
\end{align*}
it suffices to prove that $\overline{W_{A,B}-W_{A^\prime,B^\prime}}$ is the common zero locus of a set of homogeneous polynomials for all $A,A^\prime\in \mathcal{A}(n_1,s_1),\ B,B^\prime\in \mathcal{A}(n_2,s_2)$.

So let $A,A^\prime\in \mathcal{A}(n_1,s_1),\ B,B^\prime\in \mathcal{A}(n_2,s_2)$. By the first paragraph in the proof of Theorem 3.16 of \cite{harris2013algebraic}, $\psi(W_{A,B}\times W_{A^\prime,B^\prime})=W_{A,B}-W_{A^\prime,B^\prime}\subseteq M(n_1,n_2)$ contains a non-empty quasi algebraic subset of its Zarisiki closure \footnote{Here, we use the fact that $\mathbb{C}^n$ is homeomorphic to $P(\mathbb{C}^{n+1})\setminus\{x_0=0\}$ (See Proposition 2.2 in \cite{hartshorne1977algebraic}.).}. Consequently,  the analytic closure of $W_{A,B}-W_{A^\prime,B^\prime}$ coincides with its Zariski closure by Theorem 1 in Chapter 1.10 of \cite{mumford1999red}. Hence the analytic closure of $W_{A,B}-W_{A^\prime,B^\prime}$ is the common zero locus of a finite set of polynomials $\{p_i\}_{i\in I}$.

Let $X\in \overline{W_{A,B}-W_{A^\prime,B^\prime}}$. We now show that $\lambda X\in\overline{W_{A,B}-W_{A^\prime,B^\prime}}$ for all $\lambda\in\mathbb{C}$: There exists a sequence $(X_n)_{n\in\mathbb{N}}\subseteq W_{A,B}-W_{A^\prime,B^\prime}$ that converges to $X$. Next observe that if $Y\in W_{A,B}-W_{A^\prime,B^\prime}$ we also have $\lambda Y\in W_{A,B}-W_{A^\prime,B^\prime}$ for all $\lambda\in\mathbb{C}$. Now let $\lambda\in\mathbb{C}$ and observe that the sequence $(\lambda X_n)_{n\in\mathbb{N}}\subseteq W_{A,B}-W_{A^\prime,B^\prime}$ converges to $\lambda X$ and thus $\lambda X\in\overline{W_{A,B}-W_{A^\prime,B^\prime}}$.

Finally we just have to show that $\overline{W_{A,B}-W_{A^\prime,B^\prime}}$ is the common zero locus of a set of homogeneous polynomials. Let $i\in I$ and let $d_i$ be the degree of $p_i$. Consider the decomposition $p_i=\sum_{j=0}^{d_i}p_{i,j}$  where the $p_{i,j}$ are homogeneous polynomials of degree $j$. Let $X\in\overline{W_{A,B}-W_{A^\prime,B^\prime}}$. Then we have for all $\lambda\in\mathbb{C}$
\begin{align*}
0=p_i(\lambda X)=\sum_{j=0}^d\lambda^jp_{i,j}(X).
\end{align*}
Since this holds for all $\lambda\in\mathbb{C}$, we conclude that $p_i(X)=0$ if and only if $p_{i,j}(X)=0$ for all $j\in\{1,\hdots,d\}$. Repeating this for all $i\in I$ we find a set $J:=\{p_{i,j}\}_{i\in I,j\in\{1,\hdots,d_i\}}$ of homogenous polynomials such that $\overline{W_{A,B}-W_{A^\prime,B^\prime}}$ is the common zero locus of $J$.
\end{proof}

\begin{proof}[Proof of Proposition \ref{prop3}.]
Assume for a contradiction that there is $X\in M(n_1,n_2)$ with $\|X\|_F=1$ such that $P(X)\in P(\text{ker}\,M)\cap P(\overline{\Delta(M^1_{s_1,s_2}(n_1,n_2))})$. Then there is a sequence $(X_n)_{n\in\mathbb{N}}\subseteq \Delta(M^1_{s_1,s_2}(n_1,n_2))$ with $\|X_n\|_F=1$ for all $n\in\mathbb{N}$ that converges to $X$. Thus, by the continuity of $M$, the sequence $\|M(X_n)\|_2$ converges to $\|M(X)\|_2=0$. In particular for any $c>0$ there is an $N\in\mathbb{N}$ such that $\|M(X_N)\|_2\leq c$.

Conversely set $C=\min_{Y\in O}\|M(Y)\|_2$ where $O:=\{Y\in\overline{\Delta(M^1_{s_1,s_2}(n_1,n_2))}:\|Y\|_F=1\}$. If $P(\text{ker}\,M)\cap P(\overline{\Delta(M^1_{s_1,s_2}(n_1,n_2))})=\emptyset$ we have $\|M(Y)\|_2>0$ for all $Y\in O$ and by compactness of $O$ we conclude $C>0$.
\end{proof}

\subsection{Proofs of Theorems \ref{thma} and \ref{thm2} }
The approach we take in this section is similar to the approach taken in \cite{balan2006signal} to prove injectivity for the phase retrieval problem. The proofs of the theorems are immediate consequence of the following two proposition. Their proof is very close to the proof of Proposition III.1 in \cite{kech2}.
\begin{proposition}\label{propk}
Let 
\begin{align}
m\ge\left\{
	\begin{array}{ll}
		2(n_1+n_2)-4  & \mbox{if } s_1=n_1,s_2=n_2, \\
		2(s_1+s_2)-2 & \mbox{else}.
	\end{array}
\right.
\end{align}
Then, the set of $(Y,Z)\in M(m,n_1)\times M(m,n_2)$ such that the linear map 
\begin{align*}
M_{Y,Z}: M(n_1,n_2)\to\mathbb{C}^m,\ X\mapsto (\text{tr}(Z_1^tY_1X),\hdots,\text{tr}(Z_m^tY_mX))
\end{align*}
is not stably $(s_1,s_2)$-injective has strictly smaller dimension than the set $M(m,n_1)\times M(m,n_2)$.
\end{proposition}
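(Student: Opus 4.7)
The strategy is an incidence-variety dimension count, translating the question into algebraic geometry via Proposition \ref{prop3}. By that proposition, $M_{Y,Z}$ fails to be stably $(s_1,s_2)$-injective precisely when there exists a nonzero $X\in\overline{\Delta(M^1_{s_1,s_2}(n_1,n_2))}$ with $Y_i X Z_i^t = 0$ for every $i \in \{1,\ldots,m\}$. I will bound the locus of such ``bad'' pairs $(Y,Z)$ by exhibiting it as the image under projection of an incidence variety whose dimension is strictly less than $\dim(M(m,n_1)\times M(m,n_2)) = m(n_1+n_2)$.

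Concretely, I would define
\begin{align*}
I := \bigl\{((Y,Z),[X]) \in (M(m,n_1)\times M(m,n_2)) \times P\bigl(\overline{\Delta(M^1_{s_1,s_2}(n_1,n_2))}\bigr) : Y_iXZ_i^t = 0 \text{ for all } i\bigr\}.
\end{align*}
By Proposition \ref{prop2}, $P(\overline{\Delta(M^1_{s_1,s_2}(n_1,n_2))})$ is a projective algebraic set, so $I$ is a quasi-algebraic set, and the defining condition is well-posed because it is homogeneous in $X$. Let $\pi_1,\pi_2$ denote the two projections. The set of bad $(Y,Z)$ is exactly $\pi_1(I)$, which is constructible by Chevalley's theorem and satisfies $\dim\pi_1(I) \leq \dim I$; it therefore suffices to bound $\dim I$ via $\pi_2$.

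For any nonzero $X$ and any $i$, the map $(Y_i,Z_i)\mapsto Y_iXZ_i^t$ is a nonzero scalar bilinear form (if $X_{jk}\neq 0$, the monomial $(Y_i)_j(Z_i)_k$ appears with nonzero coefficient), hence cuts out a hypersurface in $\mathbb{C}^{n_1}\times\mathbb{C}^{n_2}$ of dimension $n_1+n_2-1$. Since the $m$ constraints involve disjoint variable groups, every fiber of $\pi_2$ has dimension exactly $m(n_1+n_2-1)$. Applying the fiber dimension theorem component by component gives
\begin{align*}
\dim I \leq m(n_1+n_2-1) + \dim P\bigl(\overline{\Delta(M^1_{s_1,s_2}(n_1,n_2))}\bigr).
\end{align*}
By Propositions \ref{prop1} and \ref{prop2}, $\dim\overline{\Delta(M^1_{s_1,s_2}(n_1,n_2))} = \dim\Delta(M^1_{s_1,s_2}(n_1,n_2))$ equals $2(n_1+n_2)-4$ in the first case and $2(s_1+s_2)-2$ in the second, so under the hypothesis on $m$ it is at most $m$, whence $\dim P(\overline{\Delta(M^1_{s_1,s_2}(n_1,n_2))}) \leq m-1$. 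Combining yields $\dim I \leq m(n_1+n_2)-1 < m(n_1+n_2)$, as required.

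The principal obstacle is purely bookkeeping: $\overline{\Delta(M^1_{s_1,s_2}(n_1,n_2))}$ decomposes as a finite union of closures $\overline{W_{A,B}-W_{A^\prime,B^\prime}}$ of differing dimensions and is a priori reducible, so the fiber-dimension argument must be executed component by component and the resulting finite union of proper subvarieties reassembled into one proper subvariety of $M(m,n_1)\times M(m,n_2)$; once this is set up carefully, the intersection-theoretic count is routine.
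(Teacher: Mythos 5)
Your argument is correct and is essentially the paper's own proof: both form the incidence variety of pairs $((Y,Z),X)$ with $X$ a nonzero element of $\overline{\Delta(M^1_{s_1,s_2}(n_1,n_2))}$ annihilated by $M_{Y,Z}$, observe that each of the $m$ equations $\text{tr}(Z_i^tY_iX)=0$ is a nontrivial equation in a disjoint group of variables and so cuts the fiber dimension by exactly $m$, and then bound the projection onto $(Y,Z)$ using $\dim\Delta(M^1_{s_1,s_2}(n_1,n_2))\le m$ from Propositions \ref{prop1}--\ref{prop3}. The only (cosmetic) difference is that you work over the projectivization $P(\overline{\Delta(M^1_{s_1,s_2}(n_1,n_2))})$ with the fiber dimension theorem, whereas the paper replaces it by the union of affine charts $\mathcal{W}$ to stay in affine space.
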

\begin{remark}
This proposition directly implies Theorem \ref{thm2}.
\end{remark}
\begin{proof}
Consider the quasi algebraic set
\begin{align*}
\mathcal{W}:=\bigcup_{i\in\{1,\hdots,n_1\},j\in\{1,\hdots,n_2\}}\overline{\Delta(M^1_{s_1,s_2}(n_1,n_2))}\cap \{X\in M(n_1,n_2):X_{ij}=1\}.
\end{align*} Intuitively, the set $\mathcal{W}$ is the union of the canonical charts of $P(\overline{\Delta(M^1_{s_1,s_2}(n_1,n_2))})$ and hence we find $P(\mathcal{W})=P(\overline{\Delta(M^1_{s_1,s_2}(n_1,n_2))})$ and furthermore, using Proposition \ref{prop1},
\begin{align*}
\dim \mathcal{W}=\dim P(\overline{\Delta(M^1_{s_1,s_2}(n_1,n_2))})=\dim \Delta(M^1_{s_1,s_2}(n_1,n_2))-1<m.
\end{align*} For $(Y,Z)\in M(m,n_1)\times M(m,n_2)$ and $X\in M(n_1,n_2)$ define the polynomials
\begin{align}\label{eq1}
p_i(Y,Z,X):=\text{tr}(Z_i^tY_iX),\ i\in\{1,\hdots,m\}.
\end{align}
By $\mathcal{V}$ we denote the common zero locus of the polynomials $\{p_i\}_{i\in\{1,\hdots,m\}}$. Now consider the algebraic set 
\begin{align*}
\mathcal{D}:=(M(m,n_1)\times M(m,n_2)\times\mathcal{W})\cap\mathcal{V}
\end{align*}
and let $\pi:M(m,n_1)\times M(m,n_2)\times\mathcal{W}\to M(m,n_1)\times M(m,n_2)$ be the projection on the factor $M(m,n_1)\times M(m,n_2)$. Let 
\begin{align*}
\mathcal{N}:=\{(Y,Z)\in M(m,n_1)\times M(m,n_2): M_{Y,Z} \text{ is not stably } (s_1,s_2)\text{-injective.}\},
\end{align*}
then we have $\mathcal{N}\subseteq\pi(\mathcal{D})$ \footnote{We even have $\mathcal{N}=\pi(\mathcal{D})$. This, however, will not be of relevance for our argument.}. Indeed, let $(Y,Z)\in\mathcal{N}$. Then, by Proposition \ref{prop3}, there exists $Q\in P(\text{ker}\,M_{Y,Z})\cap P(\overline{\Delta(M^1_{s_1,s_2}(n_1,n_2))})$. Since $P(\mathcal{W})=P(\overline{\Delta(M^1_{s_1,s_2}(n_1,n_2))})$, there exists an $X\in\mathcal{W}$ such that $P(X)=Q$. But then, by linearity of $M_{Y,Z}$, we have $M_{Y,Z}(X)=0$, i.e., $(Y,Z,X)\in\mathcal{D}$. Consequently we have $(Y,Z)\in\pi(\mathcal{D})$.

We will assume for now and show later that $\dim\mathcal{D}=\dim M(m,n_1)+\dim  M(m,n_2)+\dim\mathcal{W}-m$. Then, using $m> \dim\mathcal{W}$, we find that
\begin{align*}
\dim\pi(\mathcal{D})&\leq \dim\mathcal{D}=\dim M(m,n_1)+\dim M(m,n_2)+\dim\mathcal{W}-m\\
&<\dim M(m,n_1)+\dim M(m,n_2).
\end{align*}
That is, $\pi(\mathcal{D})\subseteq M(m,n_1)\times M(m,n_2)$ has strictly smaller dimension than $M(m,n_1)\times M(m,n_2)$ (and thus has Lebesgue measure zero in $M(m,n_1)\times M(m,n_2)$).

Hence, to conclude the proof, it suffices to show that indeed $\dim\mathcal{D}=\dim M(m,n_1)+ \dim M(m,n_2)+\dim\mathcal{W}-m$. To show this, it suffices to prove that for fixed $X\in \mathcal{W}$ the equations $\{p_i=0\}_{i\in\{1,\hdots,m\}}$ reduce the dimension of $M(m,n_1)\times M(m,n_2)$ by $m$ (cf. \cite{balan2006signal}). But for fixed $X\in \mathcal{W}$, the $i$-th equation of \eqref{eq1} just involves the variables of the $i$-th factor of $(\mathbb{C}^{n_1}\times \mathbb{C}^{n_2})^m\simeq M(m,n_1)\times M(m,n_2)$. Hence it suffices to prove that for fixed $X\in \mathcal{W}$ the equation
\begin{align}\label{eq2}
\text{tr}(vw^tX)=0,\ v\in \mathbb{C}^{n_2},w\in \mathbb{C}^{n_1},
\end{align}
reduces the dimension of $\mathbb{C}^{n_1}\times \mathbb{C}^{n_2}$ by one. But Equation \eqref{eq2} is a non-trivial algebraic equation on $\mathbb{C}^{n_1}\times \mathbb{C}^{n_2}$ because $X\neq 0$ and $M(n_1,n_2)$ has a basis of rank one operators. Hence, by Proposition 1.13 of \cite{hartshorne1977algebraic}, Equation \eqref{eq2} reduces the dimension of $\mathbb{C}^{n_1}\times \mathbb{C}^{n_2}$ by one.
\end{proof}

\begin{proposition}
Let 
\begin{align*}
m\ge\left\{
	\begin{array}{ll}
		2(n_1+n_2)-4  & \mbox{if } s_1=n_1,s_2=n_2, \\
		2(s_1+s_2)-2 & \mbox{else}.
	\end{array}
\right.
\end{align*}
Then, the set of $Y:=(Y_1,\hdots,Y_m)\in (M(n_2,n_1))^m$ such that the linear map 
\begin{align*}
M_{Y}: M(n_1,n_2)\to\mathbb{C}^m,\ X\mapsto (\text{tr}(Y_1X),\hdots,\text{tr}(Y_mX))
\end{align*}
is not stably $(s_1,s_2)$-injective has smaller dimension than the set $(M(n_n,n_1))^m$.
\end{proposition}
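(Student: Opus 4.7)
The plan is to mirror the proof of Proposition \ref{propk}, substituting the unstructured parametrization $Y = (Y_1, \dots, Y_m) \in (M(n_2, n_1))^m$ for the structured parametrization $(Y, Z) \in M(m, n_1) \times M(m, n_2)$. In fact the argument here is slightly simpler since no special factorized structure on the components $Y_i$ needs to be exploited.

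First I would reuse verbatim the quasi-algebraic set
\begin{align*}
\mathcal{W}:=\bigcup_{i\in\{1,\hdots,n_1\},j\in\{1,\hdots,n_2\}}\overline{\Delta(M^1_{s_1,s_2}(n_1,n_2))}\cap \{X\in M(n_1,n_2):X_{ij}=1\},
\end{align*}
which satisfies $P(\mathcal{W})=P(\overline{\Delta(M^1_{s_1,s_2}(n_1,n_2))})$ and, by Proposition \ref{prop1} together with the hypothesis on $m$, has dimension $\dim\mathcal{W}=\dim\Delta(M^1_{s_1,s_2}(n_1,n_2))-1<m$.

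Next, I would introduce the polynomials $q_i(Y,X):=\text{tr}(Y_iX)$ for $i\in\{1,\hdots,m\}$, let $\mathcal{V}$ denote their common zero locus inside $(M(n_2,n_1))^m\times M(n_1,n_2)$, and set $\mathcal{D}:=((M(n_2,n_1))^m\times\mathcal{W})\cap\mathcal{V}$. Writing $\pi$ for the projection onto the first factor, the same reasoning as in Proposition \ref{propk} (combining Proposition \ref{prop3} with the identity $P(\mathcal{W})=P(\overline{\Delta(M^1_{s_1,s_2}(n_1,n_2))})$) yields $\mathcal{N}\subseteq\pi(\mathcal{D})$, where $\mathcal{N}$ is the set of $Y$ for which $M_Y$ is not stably $(s_1,s_2)$-injective.

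The central dimensional estimate is that $\dim\mathcal{D}=\dim (M(n_2,n_1))^m+\dim\mathcal{W}-m$. For fixed $X\in\mathcal{W}$, the $i$-th equation $\text{tr}(Y_iX)=0$ involves only the $i$-th factor of $(M(n_2,n_1))^m$ and is a nontrivial linear (hence algebraic) equation on $M(n_2,n_1)$, because $X\neq 0$ and $M(n_1,n_2)$ admits a basis of rank-one operators, so the linear functional $Y_i\mapsto \text{tr}(Y_iX)$ is not identically zero. By Proposition 1.13 of \cite{hartshorne1977algebraic}, each such equation cuts the dimension by one, and since the $m$ equations act on disjoint factors the total drop is $m$. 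Combining $\dim\pi(\mathcal{D})\leq\dim\mathcal{D}$ with $\dim\mathcal{W}<m$ then gives $\dim\pi(\mathcal{D})<\dim(M(n_2,n_1))^m$, which is the desired conclusion. I do not anticipate any substantive obstacle beyond transcribing the template of Proposition \ref{propk}; the only genuine change is that linearity in $Y$ is immediate here, so the verification that each equation reduces the dimension is transparent.
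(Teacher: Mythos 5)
Your proposal is correct and follows essentially the same route as the paper: the paper's own proof consists precisely of replacing the polynomials $p_i(Y,Z,X)=\text{tr}(Z_i^tY_iX)$ by $q_i(Y,X)=\text{tr}(Y_iX)$ and then rerunning the incidence-variety and dimension-counting argument of Proposition \ref{propk} verbatim. Your observation that the verification of the dimension drop is even easier here, since for fixed $X\neq 0$ each $Y_i\mapsto\text{tr}(Y_iX)$ is a nonzero linear functional, is exactly the simplification the paper implicitly relies on.
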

\begin{remark}
This proposition directly implies Theorem \ref{thma}.
\end{remark}
\begin{proof}
Instead of the polynomials defined in Equation \eqref{eq1}, now consider the polynomials 
\begin{align*}
q_i(Y,X):=\text{tr}(Y_iX),\ i\in\{1,\hdots,m\}
\end{align*}
in $(Y_1,\hdots,Y_m)\in (M(n_2,n_1))^m$ and $X\in M(n_1,n_2)$.

All the arguments given in the remainder of the proof of Proposition \ref{propk} are also valid for these polynomials. Thus the proof can be concluded by going along the lines of the proof of Proposition \ref{propk}.
\end{proof}

\subsection{Proof of theorems \ref{thm3} and \ref{thm4}}
Denote by $F:=\left(\frac{1}{\sqrt{m}}e^{2i\pi\frac{kl}{m}}\right)_{k,l=1}^m\in M(m,m)$ the discrete Fourier matrix. Then we have the following well-known identity
\begin{align}\label{eqF}
v\circledast w=mF^*\left((Fv)\odot (Fw)\right), \forall v,w\in\mathbb{C}^m,
\end{align}
where $\odot$ denotes the Hadamard product, i.e., for $a,b\in\mathbb{C}^m$ we have $a\odot b=(a_ib_i)_{i=1}^m$. 

\begin{proof}[Proof of Theorem \ref{thm4}.]
Since $F$ is invertible it suffices to show that the map $C^\prime:\mathbb{C}^k\times \mathbb{C}^l\to \mathbb{C}^m,\ (u,v)\mapsto (FEu)\odot (FDv)$ is injective modulo scaling for Lebesgue almost all $(E,D)\in M(m,k)\times M(m,l)$. Let $u\in\mathbb{C}^k$, $v\in\mathbb{C}^l$ and $(E,D)\in M(m,k)\times M(m,l)$, then 
\begin{align*}
(C^\prime(u,v))_i=(FEu)_i ( FDv)_i=e_i^tFEuv^tD^tF^te_i=\text{tr}\left([(FD)_i]^t(FE)_iuv^t\right),
\end{align*}
where $\{e_i\}_{i\in\{1,\hdots,m\}}$ denotes the standard orthonormal basis of $\mathbb{C}^m$. Theorem \ref{thm3} implies that $C^\prime$ is injective modulo scaling for Lebesgue almost all $(FE,FD)\in M(m,k)\times M(m,l)$ and hence, since the Lebesgue measure $\lambda$ is unitarily invariant, also for Lebesgue almost all $(E,D)\in M(m,k)\times M(m,l)$.
\end{proof}

Finally, the proof of Theorem \ref{thm3} proceeds identically as the proof of Theorem \ref{thm4}. It only remains to check that in the process of converting the statement of Theorem \ref{thm4} into a statement about the Haar measure on Grassmannians, the set $\mathcal{I}:=\{(E,D)\in M(m,k)\times M(m,l): C^\prime \text{ is injective modulo scaling.}\}$ is mapped to a full measure set. However, since this part of the argument is mainly technical, we relegate it to Appendix \ref{measure}.

\section*{Acknowledgements}
We thank Kiryung Leee and Dustin Mixon for helpful comments. This work was inspired by the workshop Frames and Algebraic \& Combinatorial Geometry at the University of Bremen. F.K.'s contribution was supported by the German Science Foundation (DFG) in the context of the Emmy Noether Junior Research Group `Randomized Sensing and Quantization of Signals and Images'' (KR 4512/1-1) and the project ``Bilinear Compressed Sensing'' (KR 4512/2-1).

\bibliographystyle{unsrt}
\bibliography{bibliography}

\appendix
\section{Completition of the Proof of Theorem \ref{thm3}}\label{measure}
Let $S(m,k):=\{Y\in M(m,k):\|Y\|_F=1\}\subseteq M(m,k)$ be the unit sphere in $M(m,k)$. We make $(S(m,k),d)$ a compact metric space by setting $d(X,Y):=\|X-Y\|_F$ for all $X,Y\in S(m,k)$. Let $U(m,k)$ be the group of isometries of $(S(m,k),d)$ and let $\sigma_k$ be the Haar measure on $S(m,k)$ with respect to $U(m,k)$ . Let
\begin{align*}
\pi_k:M(m,k)\setminus \{0\}\to S(m,k),\ X\mapsto \frac{X}{\|X\|_F}.
\end{align*}
Furthermore let $\pi:M(m,k)\setminus \{0\}\times M(m,l)\setminus \{0\}\to S(m,k)\times S(m,l),\ (X,Y)\mapsto(\pi_k(X),\pi_l(Y))$. It is well-known that for all Borel sets $A\in\mathcal{B}(S(m,k))$ we have $\sigma_k(A)=\lambda(\pi_k^{-1}(A)\cap B_{m,k})/\lambda(B_{m,k})$ where $B_{m,k}=\{X\in M(m,k):\|X\|_F\leq 1\}$ is the unit ball in $M(m,k)$. But the mapping $\pi:M(m,k)\setminus \{0\}\times M(m,l)\setminus \{0\}\to S(m,k)\times S(m,l),\ (X,Y)\mapsto(X/\|X\|_F,Y/\|Y\|_F)$ maps the set $\mathcal{I}$ of full measure in $\lambda$ to the set $\pi(\mathcal{I})$ of full measure in $\sigma_k\times \sigma_l$. Indeed, since $\pi^{-1}(\pi(\mathcal{I}))=\bigcup_{\nu>0} \nu \mathcal{I}=\mathcal{I}$ and $(\lambda\times\lambda)(\mathcal{I}^c)=0$, we find  
\begin{gather*}
(\sigma_k\times\sigma_l)(\pi(\mathcal{I}))=\frac{(\lambda\times\lambda)\left(\pi^{-1}(\pi(\mathcal{I})^c)\cap( B_{m,k}\times B_{m,l})\right)}{\lambda\left(B_{m,k})\lambda(B_{m,l}\right)}\\
=\frac{(\lambda\times\lambda)\left(\pi^{-1}(\pi(\mathcal{I}))^c\cap (B_{m,k}\times B_{m,l})\right)}{\lambda\left(B_{m,k})\lambda(B_{m,l}\right)}= \frac{(\lambda\times\lambda)\left(\mathcal{I}^c\cap (B_{m,k}\times B_{m,l})\right)}{\lambda\left(B_{m,k})\lambda( B_{m,l}\right)}=0.
\end{gather*}

Let $S_k(m,k):=\{X\in S(m,k):\text{rank}\,X=k\}$ and consider the continuous mapping 
\begin{align}
\begin{split}
\varphi_k:S_k(m,k)&\to G(m,k)\\
   X&\mapsto \Pi_{\text{ran}\,X}.
   \end{split}
\end{align}
where $\Pi_{\text{ran}\,X}$ denotes the orthogonal projection on $\text{ran}\,X$. Furthermore let $\varphi:S_k(m,k)\times S_l(m,l)\to G(m,k)\times G(m,l),\ (X,Y)\mapsto (\varphi_k(X),\varphi_l(Y))$. Observe that $\varphi(\pi(\mathcal{I}))$ is precisely the set of $(P,P^\prime)\in G(m,k)\times G(m,l)$ such that $C|_{\text{ran}\,P\times \text{ran}\,P^\prime}$ is injective modulo scaling. We define a measure $\tilde{\mu}_k$ on $G(m,k)$ by setting $\tilde{\mu}_k(A):=\sigma_k(\varphi_k^{-1}(A))$ for all Borel subsets $A\in \mathcal{B}(G(m,k))$. From this one can seen that in the measure $\tilde{\mu}_k\times\tilde{\mu}_l$, the set $\varphi(\pi(\mathcal{I}))$ \footnote{Note that $\varphi^{-1}(\varphi(\pi(\mathcal{I})))=\pi(\mathcal{I})$.} has full measure and hence the following proposition concludes the proof of Theorem \ref{thm3}.
\begin{proposition}
The measure $\tilde{\mu}_k$ coincides with the the Haar measure $\mu_k$ on $G(m,k)$.
\end{proposition}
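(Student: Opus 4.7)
The plan is to exploit the characterization of the Haar measure as the unique $U(m)$-invariant Borel probability measure on the compact homogeneous space $G(m,k)$ (as noted in the footnote following the definition of $\mu_k$). Thus it suffices to verify that $\tilde{\mu}_k$ is a well-defined Borel probability measure on $G(m,k)$ and that it is invariant under the conjugation action $G:(U,P)\mapsto UPU^*$ of $U(m)$.

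For the well-definedness I would first check that $S_k(m,k)$ has full $\sigma_k$-measure in $S(m,k)$. The complement is cut out on $S(m,k)$ by the vanishing of all $k\times k$ minors of $X$, hence is the intersection of $S(m,k)$ with a proper algebraic subvariety of $M(m,k)$, and therefore has Lebesgue-measure zero in $M(m,k)$; passing through the radial projection $\pi_k$ exactly as in the paragraph preceding the proposition shows that this set has $\sigma_k$-measure zero. Continuity of $\varphi_k$ then makes $\varphi_k^{-1}(A)$ Borel for every Borel $A\subseteq G(m,k)$, and $\tilde{\mu}_k(G(m,k))=\sigma_k(S_k(m,k))=1$, so $\tilde{\mu}_k$ is a Borel probability measure.

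For $U(m)$-invariance, the crucial step is the equivariance relation $\varphi_k(UX)=U\varphi_k(X)U^*$ for all $U\in U(m)$ and $X\in S_k(m,k)$, which holds because left multiplication by $U$ sends the column span of $X$ to the column span of $UX$, and conjugation of the associated orthogonal projection by $U$ has the same effect. From this it follows that for every Borel $A\subseteq G(m,k)$,
\begin{equation*}
\varphi_k^{-1}(UAU^*)=U\cdot\varphi_k^{-1}(A),
\end{equation*}
where $U$ acts on $M(m,k)$ by left multiplication. Since left multiplication by $U\in U(m)$ preserves the Frobenius norm, it is an isometry of $(S(m,k),d)$ and therefore belongs to the group $U(m,k)$ with respect to which $\sigma_k$ is defined as Haar measure. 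Hence $\sigma_k$ is invariant under this action, and consequently $\tilde{\mu}_k(UAU^*)=\sigma_k(U\cdot\varphi_k^{-1}(A))=\sigma_k(\varphi_k^{-1}(A))=\tilde{\mu}_k(A)$.

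The two properties combined, together with the fact that the $U(m)$-action on $G(m,k)$ is transitive (every $k$-dimensional subspace is the image of the span of the first $k$ standard basis vectors under some unitary), imply by the uniqueness statement for Haar measures cited in the footnote that $\tilde{\mu}_k=\mu_k$. I do not expect any serious obstacle here; the only point that requires a little care is justifying that $\sigma_k(S(m,k)\setminus S_k(m,k))=0$ so that $\tilde{\mu}_k$ really is a probability measure, and this is handled by the algebraic-variety argument above.
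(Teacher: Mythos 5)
Your proof is correct and follows essentially the same route as the paper: show $\tilde{\mu}_k$ is a $U(m)$-invariant Borel probability measure via the equivariance $\Pi_{\text{ran}\,UX}=U\Pi_{\text{ran}\,X}U^*$ and the invariance of $\sigma_k$ under left multiplication by unitaries, then invoke uniqueness of the Haar measure. The only difference is that you spell out why $\sigma_k(S_k(m,k))=1$ and why $\varphi_k^{-1}(A)$ is Borel, details the paper leaves implicit.
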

\begin{proof}
First note that $\tilde{\mu}_k(G(m,k))=\sigma_k(\varphi^{-1}(G(m,k)))=\sigma_k(S_k(m,k))=1$. Hence it suffices to check that $\tilde{\mu}(UAU^*)=\tilde{\mu}(A)$ for all $U\in U(m)$ and $A\in\mathcal{B}(G(m,k))$. Let   $U\in U(m)$ and $A\in\mathcal{B}(G(m,k))$. Then, since $U(m)S_k(m,k)=S_k(m,k)$,
\begin{align*}
\tilde{\mu}_k(UAU^*)&=\sigma_k(\{X\in S_k(m,k):\Pi_{\text{ran}\,X}\in UAU^*\})\\
				&=\sigma_k(\{X\in S_k(m,k):U^*\Pi_{\text{ran}\,X}U\in A\})\\
				&=\sigma_k(\{X\in S_k(m,k):\Pi_{\text{ran}\,U^* X}\in A\})\\
				&=\sigma_k(U\{X\in S_k(m,k):\Pi_{\text{ran}\,X}\in A\})\\
				&=\sigma_k(\{X\in S_k(m,k):\Pi_{\text{ran}\,X}\in A\}),
\end{align*}
where we used the unitary invariance of $\sigma$ in the last step.
\end{proof}
\section{Weak Identifiability Conditions for Deconvolution Maps}\label{app}
In this appendix we apply the techniques used in the present paper to the weak identifiability problem, achieving a small improvement with respect to Theorem 3.1 in \cite{li2015identifiability} (which agrees with our result except that a strict inequality $m>s_1+s_2$ is required) and showing optimality of the resulting bound. As the proofs are in large parts very similar to those of our main results, we omit some details.


\begin{theorem}[Weak identifiability conditions]
For $E\in F(m,k)$, let $\text{ran}(E)_s=\{x\in\text{ran}\,E:x\text{ is }s\text{-sparse when expanded in }E .\}$. Let $s_1,s_2\in\mathbb{N}_+$ be such that $s_1+s_2\leq m$ and let $k,l\in\mathbb{N}$ be such that $s_1\leq k\leq m$, $s_2\leq l\leq m$. Then, for Lebesgue almost all pairs $(E,D)\in F(m,k)\times F(m,l) $, the pair of vectors $(v,w)\in \text{ran}(E)_{s_1}\times \text{ran}(D)_{s_2}$ is identifiable up to scaling with respect to the circular convolution map $C:\mathbb{C}^m\times\mathbb{C}^m\to\mathbb{C}^m,\ (v,w)\mapsto v\circledast w$.
\end{theorem}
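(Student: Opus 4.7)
The plan is to adapt the algebraic-geometric machinery of Section \ref{proofs} to the weak identifiability condition. The crucial change from the strong case is that the variety governing the failure set is (a translate of) $M^1_{s_1,s_2}(k,l)$, of dimension $s_1+s_2-1$, rather than the difference set $\Delta(M^1_{s_1,s_2}(k,l))$ of dimension $2(s_1+s_2-1)$; this accounts for the halving of the required number of measurements relative to Theorem \ref{thm4}.

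First I would apply the Fourier identity \eqref{eqF} as in the proof of Theorem \ref{thm4}. Since multiplication by the discrete Fourier matrix is a linear isomorphism that preserves Lebesgue null sets, the problem reduces to the following statement: for almost every $(Y, Z) \in M(m, k) \times M(m, l)$ and almost every $X_0 \in M^1_{s_1, s_2}(k, l)$, there is no $X_1 \in M^1_{s_1, s_2}(k, l)$ with $X_1 \ne X_0$ and $M_{Y,Z}(X_1 - X_0) = 0$. By the same argument as Proposition \ref{prop3} (with the set $\overline{\Delta(M^1_{s_1,s_2})}$ replaced by the translate $\overline{M^1_{s_1,s_2}-X_0}$), this is exactly weak identifiability modulo scaling at the sparse pair corresponding to $X_0$.

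For the core argument, for each quadruple of supports $(A, B, A', B') \in \mathcal{A}(k, s_1) \times \mathcal{A}(l, s_2) \times \mathcal{A}(k, s_1) \times \mathcal{A}(l, s_2)$, I would introduce the incidence variety
\[
\mathcal{D}_{A,B,A',B'} := \bigl\{ (Y, Z, X_0, X_1) \in M(m,k) \times M(m,l) \times W_{A,B} \times W_{A',B'} : M_{Y,Z}(X_0 - X_1) = 0 \bigr\}
\]
and restrict to the quasi-algebraic open subset $\mathcal{D}^{\circ}_{A,B,A',B'}$ on which $X_0 \neq X_1$. Exactly as in the proof of Proposition \ref{propk}, each of the $m$ defining equations $\text{tr}(Z_i^{t} Y_i (X_0 - X_1)) = 0$ involves only the block $(Y_i, Z_i)$ and is a nontrivial polynomial equation on $\{X_0 \ne X_1\}$, because $M(k, l)$ is spanned by rank-one tensors; hence by Proposition 1.13 of \cite{hartshorne1977algebraic} these $m$ equations reduce the dimension by exactly $m$, giving $\dim \mathcal{D}^{\circ}_{A,B,A',B'} = mk + ml + 2(s_1 + s_2 - 1) - m$. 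Under the hypothesis $m \ge s_1 + s_2$, the projection of $\mathcal{D}^{\circ}_{A,B,A',B'}$ to $(Y, Z, X_0)$ therefore has dimension strictly less than $mk + ml + (s_1 + s_2 - 1)$, the dimension of the ambient irreducible variety $M(m, k) \times M(m, l) \times W_{A,B}$. Taking the finite union over support quadruples preserves this strict dimension inequality, so the failure locus is Lebesgue null in $(Y, Z, X_0)$; Fubini then yields the claim for a.e.\ $(Y, Z)$ and a.e.\ $X_0$, and unwinding the Fourier reduction transfers everything back to $(E, D, v, w)$.

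The main obstacle will be establishing that the $m$ trace equations really do reduce the dimension by exactly $m$ rather than by less — this is the same delicate point addressed in the proof of Proposition \ref{propk}, resolved by the block-diagonal structure of the equations in the $(Y_i, Z_i)$ together with the nontriviality argument cited above. The remaining ingredients (Fubini, finite union of lower-dimensional algebraic sets, and the transfer from Lebesgue measure on $(FE, FD)$ to Lebesgue measure on $(E, D)$) are routine adaptations of the tools already developed in Section \ref{proofs} and Appendix \ref{measure}.
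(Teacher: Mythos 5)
Your machinery is the right one and your dimension count lands on the correct threshold, but the route you take delivers a weaker quantifier than the theorem asserts. Weak identifiability (Definition 2 and Theorem 3.1 of \cite{li2015identifiability}, which this result is explicitly compared against) is a statement about a \emph{fixed} signal pair: for every fixed coefficient pair, almost all bases $(E,D)$ render it identifiable. By building the incidence variety over \emph{both} $X_0$ and $X_1$ and then projecting to $(Y,Z,X_0)$, you can only conclude that the failure set is null in the product space, and Fubini then gives ``for a.e.\ $(E,D)$, a.e.\ $X_0$ is identifiable'' (or, slicing the other way, ``for a.e.\ $X_0$, a.e.\ $(E,D)$ works''). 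Neither version rules out an exceptional null set of signals, so you have not proved the statement for an arbitrary prescribed $(v,w)$. The paper avoids this entirely by fixing $X_0$ from the outset: the set $\mathcal{W}$ of Proposition \ref{propk} is replaced by the translate $(X_0-M^1_{s_1,s_2})\setminus\{0\}$, which is isomorphic to $M^1_{s_1,s_2}$ and hence has dimension $s_1+s_2-1$; the incidence variety then lives over $(Y,Z,X_1)$ only, its projection to $(Y,Z)$ has dimension at most $mk+ml+(s_1+s_2-1)-m<mk+ml$ exactly when $m\geq s_1+s_2$, and no Fubini step is needed. Removing $\{0\}$ from the translate is what accounts for the scaling equivalence class of $(v,w)$, just as your restriction to $X_0\neq X_1$ does.

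Two smaller points. First, your appeal to ``Lebesgue null in $(Y,Z,X_0)$'' needs care, since $W_{A,B}$ is a proper subvariety of $M(k,l)$ and carries no Lebesgue measure; one must pull back to the parameter space $\mathbb{C}^{s_1}\times\mathbb{C}^{s_2}$ via $(u,t)\mapsto ut^t$ before invoking Fubini. This is fixable but is an extra technicality your route forces and the paper's does not. Second, for the bare injectivity claim you do not need the closure $\overline{M^1_{s_1,s_2}-X_0}$ or the full strength of Proposition \ref{prop3}; the kernel condition on the set itself suffices. The per-equation nontriviality argument (each $\mathrm{tr}(Z_i^tY_i(X_0-X_1))=0$ involving only the $i$-th block and being nontrivial because rank-one matrices span $M(k,l)$) is correct and is exactly the mechanism of Proposition \ref{propk}. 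If you simply freeze $X_0$ and rerun your argument with the one-parameter-family incidence variety, you recover the paper's proof verbatim.
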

\begin{proof}
The proof of this theorem can be given along the lines of the proof of Theorem \ref{thm4} respectively Proposition \ref{propk}. The only difference is that the set $\mathcal{W}$ in Proposition \ref{propk} has to be replaced by the set $(vw^t-M^1_{s_1,s_2}(n_1,n_2))\setminus\{0\}$. Clearly $vw^t-M^1_{s_1,s_2}(n_1,n_2)$ is isomorphic to $M^1_{s_1,s_2}(n_1,n_2)$ and thus $\dim (vw^t-M^1_{s_1,s_2}(n_1,n_2))=\dim M^1_{s_1,s_2}(n_1,n_2)=s_1+s_2-1$ by the proof of Proposition \ref{prop1}.
\end{proof}

The following theorem shows that this bound is indeed optimal.
\begin{theorem}\label{thmid}
If a bilinear map $B\in\mathcal{B}(m)$ is such that $(u,v)\in\mathbb{C}^{n_1}_{s_1}\setminus\{0\}\times\mathbb{C}^{n_2}_{s_1}\setminus\{0\}$ is weakly identifiable up to scaling with respect to $B$ then $m\geq s_1+s_2$.
\end{theorem}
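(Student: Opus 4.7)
The plan is to translate weak identifiability into an algebraic intersection condition and then mimic the projective dimension count used in the proof of Theorem \ref{thm1}. First, using the bilinear--linear correspondence from Section \ref{main}, I would replace $B$ with the associated linear map $M \in \mathcal{L}(m)$. Weak identifiability modulo scaling at $(u,v)$ then becomes the statement that $u'v'^t - uv^t \in \ker M$ with $(u',v') \in \mathbb{C}^{n_1}_{s_1}\setminus\{0\}\times \mathbb{C}^{n_2}_{s_2}\setminus\{0\}$ forces $u'v'^t = uv^t$. Equivalently, the affine algebraic set $V := M^1_{s_1,s_2}(n_1,n_2) - uv^t$ satisfies $V \cap \ker M = \{0\}$.

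Next, I would isolate an irreducible component $V_0 := W_{A,B} - uv^t$ of $V$ passing through the origin, where $A \supseteq \mathrm{supp}(u)$, $B \supseteq \mathrm{supp}(v)$ with $|A|=s_1$ and $|B|=s_2$. By the proof of Proposition \ref{prop1}, $W_{A,B}$ is isomorphic to the irreducible variety $M^1(s_1,s_2)$ of dimension $s_1+s_2-1$, hence so is $V_0$. The key construction is the affine cone $\tilde V_0 := \mathbb{C}\cdot V_0$; since $V_0$ is irreducible but itself not a cone (it is a nontrivial translate of the cone $W_{A,B}$), the Zariski closure $\overline{\tilde V_0}_Z$ is an irreducible algebraic cone of dimension $s_1+s_2$. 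Arguing as in Proposition \ref{prop2}, $\overline{\tilde V_0}_Z$ is the common zero locus of a set of homogeneous polynomials, and its projectification $P(\overline{\tilde V_0}_Z)$ is a projective algebraic set of dimension $s_1+s_2-1$.

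The linearity of $M$ propagates the trivial intersection $V_0 \cap \ker M = \{0\}$ to $\tilde V_0 \cap \ker M = \{0\}$: any nonzero $tY \in \tilde V_0 \cap \ker M$ with $Y \in V_0$ and $t \neq 0$ would give $M(Y) = t^{-1}M(tY) = 0$, placing $Y$ in $V_0 \cap \ker M$ and forcing $Y = 0$. Now applying the projective intersection theorem exactly as in the proof of Theorem \ref{thm1}, the projective sets $P(\overline{\tilde V_0}_Z)$ (of dimension $s_1+s_2-1$) and $P(\ker M)$ (of dimension $\geq n_1n_2-m-1$) in $P(\mathbb{C}^{n_1n_2})$ must intersect whenever $(s_1+s_2-1) + (n_1n_2-m-1) \geq n_1n_2-1$, i.e., $m \leq s_1+s_2-1$. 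Such an intersection produces a nonzero common point of $\overline{\tilde V_0}_Z$ and $\ker M$, contradicting the identity we just derived. Hence $m \geq s_1+s_2$.

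The main obstacle I anticipate is ensuring that the intersection point produced at the projective level lies in $\tilde V_0$ itself rather than only in the extra limit set $\overline{\tilde V_0}_Z \setminus \tilde V_0$ added by Zariski closure. This requires confirming that the passage to closure does not introduce spurious common points with $\ker M$, an argument modelled on Proposition \ref{prop3}: since $\ker M$ is already Zariski closed and the construction $\tilde V_0 = \mathbb{C}\cdot V_0$ is constructible with Zariski closure of the same dimension, the property $\tilde V_0 \cap \ker M = \{0\}$ should extend to the closure by exactly the compactness/closedness manipulations used in the proof of Proposition \ref{prop3}.
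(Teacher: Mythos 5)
Your overall route is the same as the paper's: replace $B$ by the associated linear map $M$, observe that weak identifiability at $(u,v)$ forces the translated sparse rank-one set to meet $\ker M$ only in $0$, restrict to one irreducible component $W_{A,B}-uv^t$ of dimension $s_1+s_2-1$, cone it to get projective dimension $s_1+s_2-1$, and invoke the projective intersection theorem against $P(\ker M)$. The paper (Propositions \ref{propid1} and \ref{propid2}) does exactly this, except that it first normalizes via the embedding $\eta$ and unitaries so as to work with $e_1e_1^t-M^1(s_1,s_2)$ inside $M(s_1,s_2)$ rather than with $W_{A,B}-uv^t$ inside $M(n_1,n_2)$; that difference is cosmetic. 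Your dimension count for the cone (dimension $s_1+s_2$ because the translate is irreducible and not itself a cone) and the final arithmetic both match. One small point you should make explicit: $0\in M^1_{s_1,s_2}(n_1,n_2)$, so $-uv^t$ lies in your set $V$, and you need $M(uv^t)\neq 0$; this does follow from weak identifiability (compare $(u,v)$ with $(2u,v)$), but it is not automatic from your "equivalently".

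The genuine gap is exactly the obstacle you name in your last paragraph, and the fix you sketch there does not work. The intersection theorem produces a point of $P(\overline{\tilde V_0}_Z)\cap P(\ker M)$, and you must show it lies in $P(\tilde V_0)$. You propose to push $\tilde V_0\cap\ker M=\{0\}$ to the closure "by the manipulations of Proposition \ref{prop3}", but that proposition is an equivalence between \emph{stable} injectivity and emptiness of the intersection with the \emph{closure}: its proof takes a normalized sequence in the set converging to a kernel element and contradicts a uniform lower bound $\|M(X)\|_2\geq C\|X\|_F$. Theorem \ref{thmid} assumes only weak identifiability, with no stability, so nothing prevents $\ker M$ from meeting $\overline{\tilde V_0}_Z\setminus\tilde V_0$ --- and that boundary is genuinely nonempty (for instance, letting $t\to 0$ while $X\to\infty$ in $t(X-uv^t)$ produces rank-one limits that are not of the form $\lambda(X-uv^t)$). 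As written, your argument only yields the lower bound under an additional stability hypothesis, which is strictly weaker than the stated theorem. The paper closes this hole by a different means: Proposition \ref{propid2} asserts that the cone $P^{-1}\bigl(P(e_1e_1^t-M^1(s_1,s_2))\bigr)$ is \emph{itself} the common zero locus of explicit homogeneous polynomials (the $2\times2$ minors avoiding the $(1,1)$ entry), hence already Zariski closed, so the intersection point automatically lies in the set and no closure is ever taken. To complete your version you would need the analogous exact (not up-to-closure) description of $\mathbb{C}\cdot(W_{A,B}-uv^t)$; note that even the paper's claim deserves care here, since matrices such as $e_1e_2^t+e_2e_1^t$ kill all the minors in question without lying in the cone, so the identification of the cone with that zero locus is itself only valid up to closure.
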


Let us first give two propositions that allow us to prove this theorem.
\begin{proposition}\label{propid1}
If there exists a bilinear map $B\in\mathcal{B}(m)$ such that $(v,w)\in\mathbb{C}^{n_1}_{s_1}\setminus\{0\}\times\mathbb{C}^{n_2}_{s_1}\setminus\{0\}$  is weakly identifiable up to scaling then there exists a linear map $M:M(s_1,s_2)\to \mathbb{C}^m$ such that $P(e_1e_1^t-M^1(s_1,s_2))\cap P(\text{ker}\,M)=\emptyset$ \footnote{Here $e_1$ denotes the first basis vector of the standard orthonormal basis of $\mathbb{C}^n$.}.
\end{proposition}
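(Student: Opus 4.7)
The plan is to construct $M$ by restricting the linear map associated to $B$ to the coordinate subspaces supporting $v$ and $w$ and then performing an invertible change of coordinates that carries $vw^t$ to $e_1e_1^t$. The critical preliminary observation, which I would establish first, is that weak identifiability of $B$ at $(v,w)$ forces $B(v,w)\neq 0$: if $B(v,w)=0$, then for any $\lambda\in\mathbb{C}$ with $\lambda\neq 0,1$ the pair $(\lambda v,w)$ still lies in $\mathbb{C}^{n_1}_{s_1}\setminus\{0\}\times\mathbb{C}^{n_2}_{s_2}\setminus\{0\}$ and satisfies $B(\lambda v,w)=\lambda B(v,w)=0=B(v,w)$, yet $[(\lambda v,w)]\neq[(v,w)]$ because the equivalence would simultaneously require a scalar $\mu$ with $\mu=\lambda$ and $\mu=1$. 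This contradicts the weak identifiability hypothesis.

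Concretely, I would fix a set $A\subseteq\{1,\ldots,n_1\}$ of size exactly $s_1$ containing the support of $v$, and analogously $C\subseteq\{1,\ldots,n_2\}$ of size $s_2$ containing the support of $w$. Let $I_A:\mathbb{C}^{s_1}\to\mathbb{C}^{n_1}$ and $I_C:\mathbb{C}^{s_2}\to\mathbb{C}^{n_2}$ be the canonical coordinate embeddings and write $v=I_A\tilde v$, $w=I_C\tilde w$ with $\tilde v,\tilde w\neq 0$. Pick invertible matrices $U\in M(s_1,s_1)$ and $V\in M(s_2,s_2)$ with $Ue_1=\tilde v$ and $Ve_1=\tilde w$, and define the bilinear map
\[
\tilde B:\mathbb{C}^{s_1}\times\mathbb{C}^{s_2}\to\mathbb{C}^m,\qquad(a,b)\mapsto B(I_AUa,\,I_CVb).
\]
Let $M:M(s_1,s_2)\to\mathbb{C}^m$ be the linear map characterized by $M(ab^t)=\tilde B(a,b)$; then $M(e_1e_1^t)=\tilde B(e_1,e_1)=B(v,w)\neq 0$ by the observation above. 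Moreover $\tilde B$ is weakly identifiable at $(e_1,e_1)$ on the unrestricted domain $\mathbb{C}^{s_1}\setminus\{0\}\times\mathbb{C}^{s_2}\setminus\{0\}$: if $a,b$ are both nonzero and $\tilde B(a,b)=\tilde B(e_1,e_1)$, then $(I_AUa,I_CVb)$ has its two entries $s_1$- and $s_2$-sparse and both nonzero, so weak identifiability of $B$ yields $[(I_AUa,I_CVb)]=[(v,w)]$, and injectivity of $I_AU$ and $I_CV$ upgrades this to $[(a,b)]=[(e_1,e_1)]$.

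It remains to verify the projective disjointness. Any $Z\in(e_1e_1^t-M^1(s_1,s_2))\cap\ker M$ has the form $Z=e_1e_1^t-ab^t$ with $M(Z)=0$. If $ab^t=0$, then $Z=e_1e_1^t$ and $M(Z)=M(e_1e_1^t)\neq 0$, a contradiction; otherwise $a,b\neq 0$ and $M(Z)=0$ rewrites as $\tilde B(a,b)=\tilde B(e_1,e_1)$, whence weak identifiability of $\tilde B$ forces $ab^t=e_1e_1^t$ and hence $Z=0$. Either way the only element of the intersection is $0$, so $P(e_1e_1^t-M^1(s_1,s_2))\cap P(\ker M)=\emptyset$. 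The one nontrivial step, which I expect to be the main obstacle, is the initial observation that weak identifiability automatically excludes $B(v,w)=0$; without it the construction breaks precisely when $M(e_1e_1^t)=0$, since then $e_1e_1^t$ itself would lie in both $\ker M$ and $e_1e_1^t-M^1(s_1,s_2)$.
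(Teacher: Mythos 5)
Your proposal is correct and follows essentially the same route as the paper: restrict $B$ to coordinate subspaces of sizes $s_1,s_2$ supporting $v$ and $w$, apply an invertible change of coordinates carrying $vw^t$ to $e_1e_1^t$, and pull back the induced linear map, then derive the projective disjointness from weak identifiability. Your preliminary observation that weak identifiability forces $B(v,w)\neq 0$ is a welcome refinement, since it explicitly rules out the degenerate case $ab^t=0$ (i.e.\ $e_1e_1^t\in\ker M$) that the paper's proof passes over by only considering $\tilde v,\tilde w$ both nonzero.
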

\begin{proof}
We stick to the notation introduced in Subsection \ref{sub1}. Let $(v,w)\in\mathbb{C}^{n_1}_{s_1}\setminus\{0\}\times\mathbb{C}^{n_2}_{s_1}\setminus\{0\}$ and let $M_B:M(n_1,n_2)\to\mathbb{C}^m$ be the linear map induced by $B$. Clearly there exist $A\in \mathcal{A}(n_1,s_1),\ B\in \mathcal{A}(n_2,s_2)$ such that $vw^t\in W_{A,B}$. Consider the isomorphism $\eta|_{M^1(s_1,s_2)}:M^1(s_1,s_2)\to W_{A,B}$ defined in \eqref{eta}. Let $U_1,U_2\in M(s_1,s_2)$ be unitaries such that $\eta(U_1e_1e_1^tU_2)=vw^t$. Define a linear map $M:M(s_1,s_2)\to \mathbb{C}^m$ by setting $M(X)=M_B\circ \eta(U_1XU_2)$ for all $X\in M(n_1,n_2)$. Now, assume for a contradiction that there is an $X=e_1e_1^t-\tilde{v} \tilde{w}^t$ for some $\tilde{v}\in\mathbb{C}^{s_1}\setminus\{0\},\,\tilde{w}\in\mathbb{C}^{s_2}\setminus\{0\}$ with $P(X)\in P(e_1e_1^t-M^1(s_1,s_2))\cap P(\text{ker}\,M)$. Then we have $B(v,w)-B(I_AU_1\tilde{v},I_BU_2\tilde{w})=M_B(vw^t-\eta(U_1\tilde{v} \tilde{w}^tU_2))=M(X)=0$, the sought contradiction.
\end{proof}
\begin{proposition}\label{propid2}
The set $P^{-1}\left(P(e_1e_1^t-M^1(s_1,s_2))\right)$ is the common zero locus of a set of homogeneous polynomials and $\dim P(e_1e_1^t-M^1(s_1,s_2))=s_1+s_2-1$.
\end{proposition}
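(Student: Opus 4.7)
My plan is to parallel the proof of Proposition \ref{prop2} closely. First I will record that $V := e_1 e_1^t - M^1(s_1, s_2)$, being a translate of the irreducible affine variety $M^1(s_1,s_2)$ of rank-at-most-one $s_1\times s_2$ matrices, is itself an irreducible affine variety of dimension $s_1+s_2-1$ (using Proposition 12.2 of \cite{harris2013algebraic} for irreducibility and the dimension computation of $M^1(s_1,s_2)$ from Example 12.1 of \cite{harris2013algebraic} that was already invoked in the proof of Proposition \ref{prop1}).

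For the zero-locus claim I would work with the polynomial morphism
\[
\phi:\mathbb{C}\times\mathbb{C}^{s_1}\times\mathbb{C}^{s_2}\to M(s_1,s_2),\qquad (\lambda,u,v)\mapsto \lambda e_1 e_1^t - uv^t,
\]
and note that via the substitution $u\mapsto\lambda u$ one has $\mu(e_1 e_1^t-u'v^t)=\phi(\mu,\mu u',v)$, so the image of $\phi$ contains $P^{-1}(P(V))$. Conversely, the only extra points of this image (corresponding to $\lambda=0$) are the matrices $-uv^t$, and these all lie in the Zariski closure of $P^{-1}(P(V))$ as limits of $(1/n)e_1 e_1^t - uv^t \in (1/n)V$. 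Thus the image of $\phi$ and $P^{-1}(P(V))$ share the same Zariski closure. I would then follow the proof of Proposition \ref{prop2} verbatim: by the first paragraph of the proof of Theorem 3.16 in \cite{harris2013algebraic} the image of $\phi$ contains a non-empty Zariski-open subset of its Zariski closure, so by Theorem 1 of Chapter 1.10 of \cite{mumford1999red} its analytic closure coincides with its Zariski closure, and is the common zero locus of some finite collection of polynomials. Since $\phi(\alpha\lambda,\alpha u,v)=\alpha\phi(\lambda,u,v)$, this closure is invariant under the scaling $X\mapsto\alpha X$, and decomposing each defining polynomial into homogeneous components exactly as in the last paragraph of the proof of Proposition \ref{prop2} produces a set of homogeneous polynomials with the required common zero locus.

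For the dimension claim, since $V$ is irreducible of dimension $s_1+s_2-1$, it suffices to verify that $V$ is not a cone, so that the canonical map $V\setminus\{0\}\to P(V)$ has finite generic fibers. This I would check by a direct minor calculation: for $X=e_1 e_1^t - uv^t \in V$ with $uv^t$ not proportional to $e_1 e_1^t$, the condition $\lambda X\in V$ is equivalent to $(1-\lambda)e_1 e_1^t+\lambda uv^t$ having rank at most one. Assuming $s_1,s_2\ge 2$, the $\{1,2\}\times\{1,2\}$ minor of this matrix equals $\lambda(1-\lambda)u_2 v_2$, which for generic $u,v$ with $u_2,v_2\ne 0$ vanishes only at $\lambda\in\{0,1\}$. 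Hence the generic fiber of $V\setminus\{0\}\to P(V)$ is finite, and $\dim P(V)=\dim V=s_1+s_2-1$. The main delicate point, which I expect to be the principal obstacle, is that $P^{-1}(P(V))$ itself is not Zariski closed---the matrices $-uv^t$ with $uv^t$ not proportional to $e_1 e_1^t$ are limit points missing from it---so the first statement of the proposition has to be read at the level of Zariski closures, in parallel with how Proposition \ref{prop2} is stated for the analytic closure of $\Delta(M^1_{s_1,s_2}(n_1,n_2))$.
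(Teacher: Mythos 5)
Your argument is correct, but it reaches both claims by a genuinely different route than the paper. For the zero-locus claim the paper does not invoke the Chevalley/constructibility machinery at all: it simply exhibits the defining equations, namely the set $G$ of all $2\times 2$ minors $M_{ij,i'j'}$ with $(i,j)\neq(1,1)$ (those avoiding the $(1,1)$ position), and asserts that $P^{-1}\bigl(P(e_1e_1^t-M^1(s_1,s_2))\bigr)$ is their common zero locus; this is more explicit and elementary, whereas your parametrization $\phi(\lambda,u,v)=\lambda e_1e_1^t-uv^t$ together with the argument recycled from Proposition \ref{prop2} is less computational but only controls the Zariski closure. For the dimension, the paper uses the injective morphism $(\lambda,X)\mapsto\lambda e_1e_1^t+X$ from $\mathbb{C}\times\bigl(M^1(s_1,s_2)\cap\{X_{22}\neq0\}\bigr)$ into the affine cone to force its dimension up to $s_1+s_2$, while you show instead that $V\setminus\{0\}\to P(V)$ has finite generic fibers via the $\{1,2\}\times\{1,2\}$ minor $\lambda(1-\lambda)u_2v_2$; both computations are valid, and both (like the paper's use of $X_{22}$) implicitly require $s_1,s_2\geq 2$, an assumption you rightly make explicit. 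Finally, the closure subtlety you flag is real but is not a defect of your approach alone: the common zero locus of the paper's minors $G$ is likewise $\overline{P^{-1}(P(V))}$ rather than $P^{-1}(P(V))$ itself (for instance a generic rank-one matrix $-uv^t$, or a matrix with nonzero first row and first column but vanishing lower-right $(s_1-1)\times(s_2-1)$ block, annihilates every minor in $G$ and is an analytic limit of elements of $P^{-1}(P(V))$ without belonging to it), so the first assertion of the proposition has to be read at the level of closures in either proof; since the closure has the same dimension, the dimension count, which is what feeds into Proposition \ref{propid1} and Theorem \ref{thmid}, is unaffected.
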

\begin{proof}
For $1\leq i<i^\prime\leq s_1$ and $1\leq j<j^\prime\leq s_2$ define the $2\times 2$ minors 
\begin{align*}
M_{ij,i^\prime j^\prime}:M(s_1,s_2)\to \mathbb{C},\ X\mapsto\det\begin{pmatrix}
X_{ij} & X_{ij^\prime}\\
X_{i^\prime j} & X_{i^\prime j^\prime}
\end{pmatrix}.
\end{align*}
Then $V_G=P^{-1}(P(e_1e_1^t-M^1(s_1,s_2)))$ is the common zero locus of the set of homogeneous polynomials 
$G:=\{M_{ij,i^\prime j^\prime}\}_{1\leq i<i^\prime\leq s_1,1\leq j<j^\prime\leq s_2, (i,j)\neq (1,1)}$. To determine the dimension of $P(e_1e_1^t-M^1(s_1,s_2))$ consider the injective morphism
\begin{align*}
\eta:\mathbb{C}\times (M^1(s_1,s_2)\cap \{X\in M(s_1,s_2):X_{22}\neq 0\})\to V_G,\ (\lambda,X)\to \lambda e_1e_1^*+X.
\end{align*}
Now note that $\dim (M^1(s_1,s_2)\cap \{X\in M(s_1,s_2):X_{22}\neq 0\})=\dim M^1(s_1,s_2)=s_1+s_2-1$ \footnote{$M^1(s_1,s_2)$ is irreducible by Example 12.1 of \cite{harris2013algebraic} and hence $M^1(s_1,s_2)\cap \{X\in M(s_1,s_2):X_{22}\neq 0\}$ is irreducible as a non-empty open subset of the irreducible set $M^1(s_1,s_2)$ (see Exercise 1.1.3 of \cite{hartshorne1977algebraic}). Finally we have $\dim (M^1(s_1,s_2)\cap \{X\in M(s_1,s_2):X_{22}\neq 0\})=\dim M^1(s_1,s_2)$ by Exercise 1.6 and Proposition 1.10 of \cite{hartshorne1977algebraic}.} and hence we find $\dim P(e_1e_1^t-M^1(s_1,s_2))=\dim \mathbb{C}+\dim M^1(s_1,s_2)-1=s_1+s_2-1$.
\end{proof}

Now we are in a position to proof Theorem \ref{thmid}.
\begin{proof}[Proof of Theorem \ref{thmid}.]
Using Proposition \ref{propid1}  it suffices to show that if for a linear map $M:M(s_1,s_2)\to\mathbb{C}^m$ one has $P(\text{ker}\,M)\cap P(e_1e_1^t-M^1(s_1,s_2))=\emptyset$, then $m\ge s_1+s_2$. 

Clearly, for such an $M\in\mathcal{L}(m)$, $P(\text{ker}\,M)$ is a projective algebraic set and, by Proposition \ref{propid2}, $P(e_1e_1^t-M^1(s_1,s_2))$ is also a projective algebraic set. Then, the result follows again from the intersection theorem for projective varieties (cf. proof of Theorem \ref{thm1}).
\end{proof}

\end{document}